\title{Discretisation of the Bloch Sphere, Fractal Invariant Sets and Bell's Theorem}
\author{T.N.Palmer\\ Department of Physics, University of Oxford, UK\\
tim.palmer@physics.ox.ac.uk}
\date{\today}                                          
\newcommand\be{\@ifstar{\[}{\begin{equation}}}
\newcommand\ee{\@ifstar{\]}{\end{equation}}}
\newcommand\bp{\begin{pmatrix}}
\newcommand\ep{\end{pmatrix}}
\newtheorem{theorem}{Theorem}
\newtheorem{lemma}{Lemma}
\begin{document}
\bibliographystyle{plain}
\maketitle

\begin{abstract}
An arbitrarily dense discretisation of the Bloch sphere of complex Hilbert states is constructed, where points correspond to bit strings of fixed finite length. Number-theoretic properties of trigonometric functions (not part of the quantum-theoretic canon) are used to show that this constructive discretised representation incorporates many of the defining characteristics of quantum systems: completementarity, uncertainty relationships and (with a simple Cartesian product of discretised spheres) entanglement. Unlike Meyer's earlier discretisation of the Bloch Sphere, there are no orthonormal triples, hence the Kocken-Specker theorem is not nullified. A physical interpretation of points on the discretised Bloch sphere is given in terms of ensembles of trajectories on a dynamically invariant fractal set in state space, where states of physical reality correspond to points on the invariant set. This deterministic construction provides a new way to understand the violation of the Bell inequality without violating statistical independence or factorisation, where these conditions are defined solely from states on the invariant set. In this finite representation there is an upper limit to the number of qubits that can be entangled, a property with potential experimental consequences.  
\end{abstract}

\section{Introduction}

The fields $\mathbb R$ and $\mathbb C$ are deeply embedded in the formalism of both classical and quantum theories of physics. However, the status of these continuum fields is fundamentally different in the two classes of theory. 

Consider, for example, a typical finite dimensional classical system with well-posed initial value-problem, i.e. where the state of the system at time $t>t_0$ depends continuously on the initial conditions at $t_0$ (e.g. the chaotic Lorenz equations \cite{Lorenz:1963} \cite{Palmer:2014b}). This property of continuity ensures that the initial-value problem can be solved to arbitrary accuracy by algorithm. In this sense the strict continuum (with all its inherent paradoxes \cite{Wagon}) plays no essential role in classical theory, and the real line can, if desired, be considered as approximating some underpinning granular structure in the smooth limit as such granularity tends to zero. 

By contrast, the continuum plays a vital role in quantum theory (even the quantum theory of finite-dimensional systems). This point was made explicitly in Hardy's axiomatic approach to quantum theory \cite{Hardy:2004} (see also \cite{MasanesMuller} \cite{Chiribella}). In particular, Hardy's `Continuity Axiom' states: `There exists a continuous reversible transformation (one which can be made up from $\ldots$ transformations only infinitesimally different from the identity) of a system between any two pure states of that system.' As Hardy notes, the Continuity Axiom provides the key difference between classical and quantum theory. As such, the continuum complex Hilbert space of quantum theory (and hence quantum theory itself) can only be a singular \cite{Berry} and not a smooth limit of some more granular representation of quantum physics, should such a description exist. 

As an example which makes the role of  Hardy's axiom explicit - one that will be the focus of this paper - consider a discretisation of the set of complex Hilbert states
\be
\label{psi}
|\psi\rangle = \cos \frac{\theta}{2} |0\rangle + e^{i\phi} \sin \frac{\theta}{2}  |1\rangle
\ee
where $\cos^2 \theta/2$ and $\phi /2\pi$ are describable by a finite number $N$ of binary digits, and $N$ is some arbitrarily large but finite positive integer. Now consider a (Hadamard-like) unitary transform which maps
\be
\label{psi1}
 \cos \frac{\phi}{2} |0\rangle + \sin \frac{\phi}{2} |1\rangle \mapsto \frac{1}{\sqrt 2} ( |0\rangle + e^{i \phi} |1\rangle)
\ee
Suppose $0 < \phi <\pi/2$. In quantum theory, such a transform, which maps a point on the Bloch sphere to a second, is well defined. However, restricted to the discretisation defined above, the transform is undefined, no matter how big is $N$, for the following reason: by Niven's Theorem (see Section \ref{finite}), if $\cos^2 \phi/2$ can be described by a finite number of bits, then $\phi/2\pi$ cannot. Hence if the initial state of the transform belongs to the discretised set, then the final state does not, and \emph{vice versa}. 

Hardy's Continuity Axiom raises a profound question: should one simply acknowledge quantum theory with its continuum Hilbert state space as fundamentally correct and abandon any hope of describing physics finitely \cite{Ellis}, or should one seek an alternative finite-$N$ theory of quantum physics, necessarily different in theoretical structure to quantum theory, from which quantum theory is emergent only as a singular limit at $N=\infty$? In this paper we pursue the latter possibility. 

In Section \ref{finite} a constructive finite representation of complex Hilbert vectors is presented, where complex numbers and quaternions are linked to bit-string permutation/negation operators. It is shown how two of the most important defining properties of quantum theory: complementarity and the uncertainty relationships, derive from geometric properties of spherical triangles and number-theoretic properties of trigonometric functions. A physical interpretation of such finite Hilbert vectors is presented in Section \ref{physics} in terms of a symbolic representation of a deterministic fractal geometry $I_U$ (the invariant set) in cosmological state space, on which states of physical reality exist and evolve. A specific pedagogical example, the sequential Stern-Gerlach device, is discussed within the framework of this invariant set model to illustrate the non-commutativity of spin operators from number-theoretic properties of spherical triangles. In Section \ref{entangle}, the construction is generalised to include multiple entangled qubits. Unlike in quantum theory, the state space of $J$ qubits in this finite framework is simply the $J$-times Cartesian product of the single finite Bloch sphere. The Bell Theorem is analysed within this framework. In invariant set theory, the Bell inequality is violated exactly as quantum theory. Consistent with this, the invariant set model violates both the Statistical Independence and Factorisation postulates as they are usually defined in the Bell Theorem. However, by restricting to states which lie on $I_U$, it is argued that the Bell Inequality can be violated as in quantum theory, without violating `Statistical Independence on $I_U$' and 'Factorisation on $I_U$'. It then becomes a matter of definition whether invariant set theory violates free choice and local causality.  We argue that if the latter are based on processes occurring in space-time, the latter are plausible definitions of free choice and local causality, when the invariant set model is locally causal. 

This work can be considered a continuation of a programme started by Meyer \cite{Meyer:1999} who showed it was possible to nullify the Kochen-Specker theorem by considering a rational set of orthonormal triples of 'colourable' points which are dense on $\mathbb S^2$, each triple corresponding to an orthogonal vector triad in $\mathbb R^3$. The construction here shares some number-theoretic similarities with that of Meyer. However, in the construction here, as discussed in Section \ref{finite}, \emph{no} such triples of points exist. In this sense, and in contrast with that of Meyer, the present construction does not nullify the Kochen-Specker theorem (indeed, as discussed, the construction is explicitly contextual in character). 

In Section \ref{discussion} is discussed a possible property of this ansatz (which sets it apart from quantum theory) that could be tested experimentally: that only a finite number $\log_2 N$ of qubits can be mutually entangled.  Such a limit may conceivably be a manifestation in the invariant set model of the inherently decoherent nature of gravity. 

\section{From Finite Bit Strings to Hilbert Vectors}
\label{finite}
In this Section, a discretisation of the Bloch Sphere is constructed. In this representation, points on the Bloch Sphere represent finite bit strings.  In Section \ref{bit} it is shown how complex number and quaternionic structure emerge naturally from negation/permutation operators acting on such bit strings. Quantum-like properties of these bit strings are a consequence of the geometry of spherical triangles and number-theoretic properties of trigonometric functions, reviewed in Section \ref{number}, which are not part of the quantum theoretic canon. The mapping of bit strings onto the sphere is described in Section \ref{mapping}. The relationship between bit strings and discrete complex Hilbert vectors is described in Section \ref{hilbert}. The relation to Meyer's \cite{Meyer:1999} construction is outlined in Section \ref{meyer}.  

\subsection{Quaternions from Bit-String Permutations}
\label{bit}
Consider the bit string $\{a_1 \; a_2 \; a_3  \ldots a_{N}\}$ where $a_i \in \{a, \cancel a\}$ denote symbolic labels (to be defined), $N=2^M$ and $M \ge 2$ is an integer.  An order-$N$ cyclic permutation operator is defined as 
\be
\label{zeta}
\zeta \{a_1 \; a_2 \; a_3  \ldots a_{N}\}  = \{a_2\; a_3 \ldots a_N\; a_1\}
\ee
With this define
\begin{align}
T_a(n_2,n_1)&=\zeta^{n_1} \{\underbrace{aa \ldots a}_{N-n_2} \ \  \underbrace{\cancel a \cancel a \ldots \cancel a}_{n_2} \}
\end{align}
Then,
\begin{align}
\label{sprops}
T_a(\frac{N}{2}, \frac{N}{4})
&= \zeta^{\frac{N}{4}}\{\underbrace{a a \ldots a}_{\frac{N}{2}}\ \ \ \underbrace{\cancel a \cancel a \ldots \cancel a}_{\frac{N}{2}}\}
\nonumber \\
&= \{\underbrace{a  a \ldots  a}_{\frac{N}{4}}\ \ \ \underbrace{\cancel a \cancel a \ldots \cancel a}_{\frac{N}{4}}\} \ || \ \{\underbrace{\cancel a \cancel a \ldots \cancel a}_{\frac{N}{4}}\ \ \underbrace{a a \ldots a}_{\frac{N}{4}} \}
\end{align}
where $||$ denotes concatenation. 

We now introduce the `square-root-of-minus-one' operator $i$ defined by
\begin{align}
\label{i}
i \{a_1 a_2 \ldots a_{\frac{N}{2}}\}& =  \{a_{\frac{N}{4}+1}  a_{\frac{N}{4}+2} \ldots a_{\frac{N}{2}} \ \ \cancel a_1 \cancel a_2 \ldots \cancel a_{\frac{N}{4}} \} \nonumber \\
\implies i^2 \{a_1 a_2 \ldots a_{\frac{N}{2}}\}&=\ \{\cancel a_{1} \cancel a_{2} \ldots \cancel a_{\frac{N}{2}}\} \equiv -\{a_1 a_2 \ldots a_{\frac{N}{2}}\} 
\end{align}
where $\cancel a_i=\cancel a$ if $a_i=a$ and \emph{vice versa}. We also introduce the notation 
\be
\label{2comp}
T_1 || T_2 \equiv
\begin{pmatrix}
T_1\\
T_2
\end{pmatrix}
\ee
where $T_1$ and $T_2$ are each $N/2$-element bit strings. Using this, it is straightforward to show that
\begin{align}
\label{q}
T_a(\frac{N}{2}, 0)
= & 
\begin{pmatrix}
0 & 1 \\
-1 & 0
\end{pmatrix}
T_a(0,0) \nonumber \\
T_a(\frac{N}{2}, \frac{N}{4})=&
 \begin{pmatrix}
i & 0 \\
0 & -i
\end{pmatrix}
T_a(0,0)
= 
\begin{pmatrix}
0 &- i \\
-i & 0
\end{pmatrix}
T_a(\frac{N}{2}, 0).
\end{align}
In (\ref{q}), the three $ 2 \times 2$ matrices
\be 
\label{quat}
\mathbf{i}_1=
\begin{pmatrix}
0 & 1 \\
-1 & 0
\end{pmatrix};
\ \ \ 
\mathbf{i}_2=
 \begin{pmatrix}
i & 0 \\
0 & -i
\end{pmatrix}
\ \ \ \ 
\mathbf{i}_3=
\begin{pmatrix}
0 &- i \\
-i & 0
\end{pmatrix};
\ee
are pure unit quaternions (hence $\mathbf{i}^2_1=\mathbf{i}^2_2=\mathbf{i}^2_3=\mathbf{i}_1 \circ \mathbf{i}_2 \circ \mathbf{i}_3=-\mathbf{1})$, where $\mathbf{1}$ is the identity matrix.
Each of these quaternions is itself a pure unit complex number and therefore can be thought of as representing a rotation by $\pi/2$ radians. Since the quaternions represent rotations about orthogonal directions in 3-space, we can represent the relations (\ref{q}) graphically, as shown in Fig \ref{quatern}. The direction of the axis labelled $\hat a$ is arbitrary. Before continuing with mapping $T_a(m, n)$ onto the sphere, we need a key number theoretic result.

\begin{figure}
\centering
\includegraphics[scale=0.3]{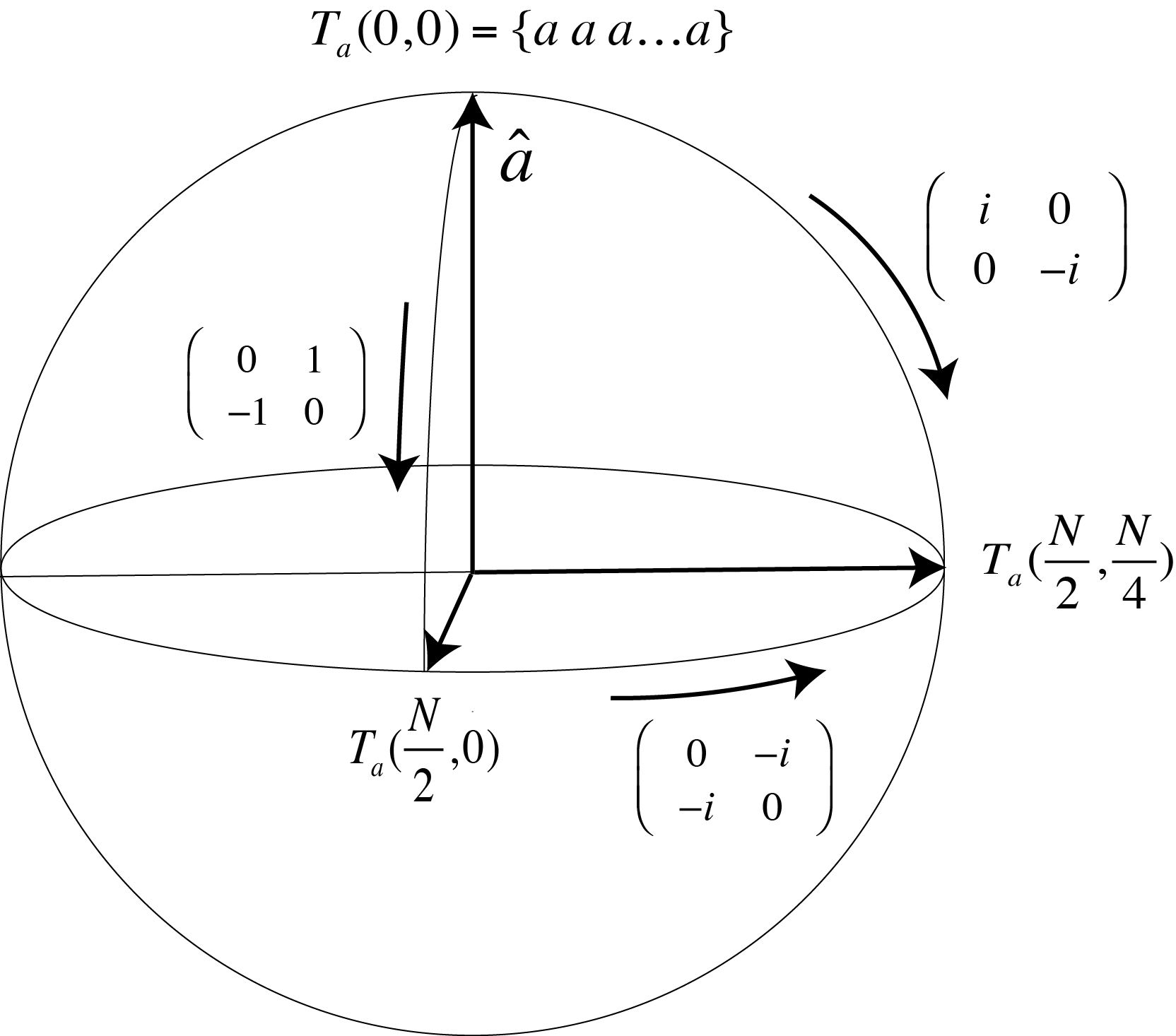}
\caption{\emph{Bit strings represented as points on a discretisation of the sphere, transformed into one another using quaternionic bit-string operators. The precise form of the representation is described in Section \ref{mapping}.}}
\label{quatern}
\end{figure}

\subsection{Number Theorems}
\label{number}
Central to this paper is Niven's theorem:
\begin{lemma}
\label{niven}
 Let $\phi/2\pi \in \mathbb{Q}$. Then $\cos \phi \notin \mathbb{Q}$ except when $\cos \phi =0, \pm \frac{1}{2}, \pm 1$. \cite{Niven, Jahnel:2005}
\end{lemma}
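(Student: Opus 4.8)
The plan is to prove the statement by showing that $2\cos\phi$ is always an \emph{algebraic integer} when $\phi/2\pi\in\mathbb{Q}$, and then invoking the fact that a rational algebraic integer must be an ordinary integer. First I would write $\phi/2\pi=p/q$ with $p,q$ coprime integers, so that $q\phi=2\pi p$ and hence $\cos(q\phi)=1$. The goal is then to exhibit $x:=2\cos\phi$ as a root of a monic polynomial with integer coefficients, from which its integrality (in the algebraic sense) follows immediately.

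The central step is a Chebyshev-type recurrence. I claim that $2\cos(n\phi)=P_n(x)$ for a monic polynomial $P_n\in\mathbb{Z}[x]$ of degree $n$. This follows by induction from the product-to-sum identity $2\cos((n+1)\phi)=(2\cos\phi)(2\cos(n\phi))-2\cos((n-1)\phi)$, i.e. the recurrence $P_{n+1}(x)=xP_n(x)-P_{n-1}(x)$ with base cases $P_0(x)=2$ and $P_1(x)=x$; monicity, integrality and the correct degree are all visibly preserved at each step. Applying this with $n=q$ and using $\cos(q\phi)=1$ shows that $x=2\cos\phi$ is a root of the monic integer polynomial $P_q(x)-2$, so $2\cos\phi$ is an algebraic integer.

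Next I would bring in the hypothesis $\cos\phi\in\mathbb{Q}$, so that $2\cos\phi$ is simultaneously rational and an algebraic integer. By the standard fact that $\mathbb{Z}$ is integrally closed in $\mathbb{Q}$ (equivalently, the rational root theorem applied to the monic defining polynomial), any such number lies in $\mathbb{Z}$, whence $2\cos\phi\in\mathbb{Z}$. Since $|2\cos\phi|\le 2$, the only possibilities are $2\cos\phi\in\{-2,-1,0,1,2\}$, that is $\cos\phi\in\{0,\pm\tfrac12,\pm1\}$, which is exactly the claimed exceptional set; for every other rational multiple of $2\pi$, $\cos\phi$ is therefore irrational.

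I expect the main obstacle to be less the algebra of the recurrence than the conceptual weight of the final closure step, which is where the number theory genuinely bites. Establishing that $2\cos\phi$ is an algebraic integer only tells us it is an algebraic integer of absolute value at most $2$, of which there are infinitely many; it is precisely the passage ``rational algebraic integer $\implies$ rational integer'' that collapses this to a finite list and forces the exceptional values. Getting the monic integer recurrence right (so that the defining polynomial is genuinely monic, not merely integer-coefficiented) is the enabling technical point, since integral closedness is what then does the real work.
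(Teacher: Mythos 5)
Your proof is correct, but it takes a genuinely different route from the paper's. You establish that $2\cos\phi$ is an algebraic integer via the monic Chebyshev-type recurrence $P_{n+1}(x)=xP_n(x)-P_{n-1}(x)$, $P_0(x)=2$, $P_1(x)=x$, evaluated at $n=q$ where $\cos(q\phi)=1$, and then invoke the integral closedness of $\mathbb{Z}$ in $\mathbb{Q}$ (equivalently the rational root theorem) together with the bound $|2\cos\phi|\le 2$ to collapse everything to the five exceptional values. The paper instead runs a hands-on denominator-growth argument: writing $2\cos\phi=a/b$ in lowest terms, the doubling identity $2\cos 2\phi=(2\cos\phi)^2-2=(a^2-2b^2)/b^2$ is shown to be again in lowest terms, so if $b\neq\pm1$ the denominators of $2\cos 2^k\phi$ grow without bound; but rationality of $\phi/\pi$ forces the orbit $(2\cos 2^k\phi)_{k\in\mathbb{N}}$ to take only finitely many values, a contradiction, whence $b=\pm1$. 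The two arguments are morally parallel --- both ultimately show that a rational $2\cos\phi$ must be a rational integer --- but the paper's is self-contained and elementary, needing only coprimality and pigeonhole on the doubling orbit, whereas yours isolates the reusable conceptual fact that $2\cos(2\pi p/q)$ is always an algebraic integer; that fact generalizes immediately (e.g.\ to the sine variant needed in Theorem \ref{123}, via $\sin\phi=\cos(\pi/2-\phi)$), at the price of importing the closure step that the paper effectively proves by hand. One cosmetic nitpick: $P_0(x)=2$ is not monic; the monicity of $P_{n+1}$ in your induction really rests on $\deg P_{n-1}<n+1$, which is all you need, so nothing breaks.
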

\begin{proof} 
Assume that $2\cos \phi = a/b$ where $a, b \in \mathbb{Z}, b \ne 0$ have no common factors.  Since $2\cos 2\phi = (2 \cos \phi)^2-2 \nonumber$ then 
\be
2\cos 2\phi = \frac{a^2-2b^2}{b^2} \nonumber
\ee
Now $a^2-2b^2$ and $b^2$ have no common factors, since if $p$ were a prime number dividing both, then $p|b^2 \implies p|b$ and $p|(a^2-2b^2) \implies p|a$, a contradiction. Hence if $b \ne \pm1$, then the denominators in $2 \cos \phi, 2 \cos 2\phi, 2 \cos 4\phi, 2 \cos 8\phi \dots$ get bigger without limit. On the other hand, if $\phi/\pi=m/n$ where $m, n \in \mathbb{Z}$ have no common factors, then the sequence $(2\cos 2^k \phi)_{k \in \mathbb{N}}$ admits at most $n$ values. We have a contradiction. Hence $b=\pm 1$ and $\cos \phi =0, \pm\frac{1}{2}, \pm1$. 
\end{proof}
We now define the three sets of angles:
\begin{align}
\label{X}
X_1=&\{0\le \phi \le 2\pi: \frac{\phi}{2\pi}= \frac{n}{N}, \ 1\le n \le N\}  \nonumber \\
X_2=&\{0 < \theta < \pi: \cos \theta = 1-\frac{2m-1}{N/2}, \ 1 \le m \le N/2 \}\nonumber \\
X_3=&\{0 < \theta < \pi: \sin \theta = 1-\frac{2m-1}{N/2}, \ 1 \le m \le N/2 \}
\end{align}
The $N$ angles that belong to $X_1$ are distributed uniformly in radians whilst the $N/2$ angles that belong to each of $X_2$ and $X_3$ are distributed uniformly in the cosine and sine of angle respectively. By construction, $X_2$ and $X_3$ exclude the angles $\theta= 0, \pi/2, \pi$. Since $N$ is a power of 2, $X_2$ and $X_3$ also exclude $\theta=\pi/3$. Hence $X_2$ and $X_3$ exclude all the exceptions in Niven's theorem. Note that in $X_2$, $\cos^2 \theta/2=1-\frac{2m-1}{N}$, which varies between $1-1/N$ and $1/N$ as $m$ varies between $1$ and $N/2$.

We now have:
\begin{theorem}
\label{123}
The sets $X_1$, $X_2$ and $X_3$ are mutually disjoint
\end{theorem}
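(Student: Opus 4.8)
The plan is to establish the three pairwise disjointness statements $X_1\cap X_2=\emptyset$, $X_1\cap X_3=\emptyset$ and $X_2\cap X_3=\emptyset$ separately, organising the argument around a clean dichotomy: every angle in $X_1$ is a \emph{rational} multiple of $\pi$, whereas every angle in $X_2\cup X_3$ is an \emph{irrational} multiple of $\pi$. Once this dichotomy is in place the first two intersections are immediate, and only the third demands genuine work. Throughout I would write $N=2^M$ with $M\ge 2$, so that $N/2=2^{M-1}$ is even.

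For the $X_1$ intersections I would argue as follows. If $\theta\in X_1$ then $\theta=2\pi n/N$, so $\theta/\pi=2n/N\in\mathbb{Q}$. If instead $\theta\in X_2$, the defining value $\cos\theta=1-(2m-1)/(N/2)$ is rational and, by the construction recorded after (\ref{X}), avoids the exceptional values $0,\pm\tfrac12,\pm1$; the contrapositive of Lemma \ref{niven} then forces $\theta/2\pi\notin\mathbb{Q}$, i.e. $\theta$ is an irrational multiple of $\pi$. For $\theta\in X_3$ I would reduce to the cosine case via $\sin\theta=\cos(\tfrac{\pi}{2}-\theta)$: if $\theta/\pi$ were rational then $(\tfrac{\pi}{2}-\theta)/2\pi=\tfrac14-\theta/2\pi$ would be rational too, and since $\sin\theta$ is a non-exceptional rational, Lemma \ref{niven} applied to the angle $\tfrac{\pi}{2}-\theta$ gives a contradiction. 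Hence $X_2$ and $X_3$ consist entirely of irrational multiples of $\pi$, which are disjoint from the rational multiples comprising $X_1$, settling $X_1\cap X_2=X_1\cap X_3=\emptyset$.

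The crux, and the step I expect to be the main obstacle, is $X_2\cap X_3=\emptyset$, because here Niven's theorem gives no leverage: both sets already consist of irrational multiples of $\pi$. Instead I would combine the Pythagorean identity with a parity argument. Any $\theta\in X_2$ has $\cos\theta=p/2^{M-1}$ with $p=2^{M-1}-(2m-1)$, which is \emph{odd} (even minus odd); likewise $\theta\in X_3$ forces $\sin\theta=q/2^{M-1}$ with $q$ odd. Were $\theta$ to lie in both sets, $\cos^2\theta+\sin^2\theta=1$ would yield $p^2+q^2=2^{2M-2}$. But an odd square is $\equiv 1\pmod 4$, so $p^2+q^2\equiv 2\pmod 4$, whereas $2^{2M-2}\equiv 0\pmod 4$ for every $M\ge 2$. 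This contradiction completes the proof.

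It is worth noting that this final parity argument is entirely self-contained and holds for all $M\ge 2$, so the only non-elementary ingredient in the whole theorem is Lemma \ref{niven}, used solely to separate $X_1$ from $X_2\cup X_3$. The one point requiring care is the verification, used in the second paragraph, that the construction of $X_2$ and $X_3$ genuinely excludes every Niven exception; this is again a statement about odd numerators meeting an even power of two, of exactly the same flavour as the parity computation already carried out for $X_2\cap X_3$.
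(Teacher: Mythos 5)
Your decomposition into three pairwise checks and two of your three sub-arguments coincide with the paper's own proof. For $X_1\cap X_2$ the paper likewise invokes Lemma \ref{niven} directly, and for $X_2\cap X_3$ it uses exactly your Pythagorean-plus-parity argument: it writes the identity as $(2m-1-N')^2+(2m'-1-N')^2=N'^2$ with $N'=N/2$ and observes that the right side is divisible by $4$ while the left side is divisible by $2$ but not by $4$ --- the same computation as your $p^2+q^2\equiv 2\pmod 4$ with $p,q$ odd, so your expectation that this step carries the real content, and that it is elementary, matches the paper. The one place you genuinely diverge is $X_1\cap X_3$: the paper proves a sine analogue of Niven's lemma from scratch, rerunning the doubling argument with $2\sin\phi=a/b$ via $2\cos 2\phi=2-(2\sin\phi)^2$, whereas you reduce to the cosine case through $\sin\theta=\cos(\tfrac{\pi}{2}-\theta)$, using that $\theta/2\pi\in\mathbb{Q}$ implies $(\tfrac{\pi}{2}-\theta)/2\pi=\tfrac14-\theta/2\pi\in\mathbb{Q}$. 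Your reduction is shorter and avoids duplicating the number theory; the paper's route yields a free-standing ``rational sine'' statement but is otherwise equivalent.

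One shared caveat. Both you and the paper lean on the assertion (made after (\ref{X})) that $X_2$ and $X_3$ avoid every Niven exception, and the verification you sketch --- an odd numerator cannot equal $\pm 2^{M-2}$ --- only works for $M\ge 3$. At $M=2$ (i.e.\ $N=4$) it fails: there $X_2=\{\pi/3,\,2\pi/3\}$ and $X_3=\{\pi/6,\,5\pi/6\}$, which \emph{are} rational multiples of $\pi$, so your rational/irrational dichotomy (and equally the paper's appeal to Lemma \ref{niven}) breaks down; the theorem itself still holds there only by direct inspection, since $X_1=\{\pi/2,\pi,3\pi/2,2\pi\}$. This is a blemish inherited from the paper rather than a defect peculiar to your argument; for $M\ge 3$ your proof is complete as written.
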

\begin{proof}
By Lemma \ref{niven}, $X_1$ and $X_2$ are disjoint. To show that $X_1$ and $X_3$ are disjoint then consider Lemma \ref{niven} but with $2 \sin \phi =a/b$. Because $2\cos 2\phi=2-(2\sin \phi)^2=(2b^2-a^2)/b^2$, the proof in Lemma \ref{niven} goes through with $\sin$ replacing $\cos$, with no essential  changes. To show that $X_2$ and $X_3$ are mutually disjoint, suppose to the contrary, that $\theta_0$ belongs to both $X_2$ and $X_3$. Since $\cos^2 \theta_0 + \sin^2 \theta_0=1$, then based on (\ref{X}), there must exist integers $m$ and $m'$ such that 
\be
(2m-1-N')^2+(2m'-1-N')^2=N'^2
\ee
where $N'=N/2$. Since $M\ge2$, $N'=2^{M-1}$ is even, and the right hand side is divisible by 4. Squaring out the brackets on the left hand side, it is easily seen that the left hand side, whilst divisible by 2, cannot be divisible by 4. Hence the supposition that $\theta_0$ exists is false. 
\end{proof}

\subsection{Mapping Bit Strings onto the Sphere}
\label{mapping}

Let $(p_x, p_y, p_z)$ denote the set of points on the unit sphere $\mathbb S^2$ corresponding to an arbitrary orthonormal triple of vectors $(\hat x, \hat y, \hat z )$ in $\mathbb R^3$. Let $\mathcal C_x$ denote a colatitude/longitude coordinate system with the north pole ($\theta=0$) at $p_x$, and $p_y$ and $p_z$ lie on the equatorial circle at $\phi=0$ and $\phi=\pi/2$ respectively. Let $\mathcal C_y$ and $\mathcal C_z$ denote corresponding coordinate systems with the north pole at $p_y$ and $p_z$ respectively. We define bit strings $S_x(\theta, \phi)$, $S_y(\theta, \phi)$ and $S_z(\theta, \phi)$ as follows:
\begin{align}
\label{sz}
\text{ for  } (\theta, \phi) \in \mathcal C_x \ \ \ \ S_x(\theta, \phi)&=T_a(2m-1,n);\nonumber \\
\text{ for  }(\theta, \phi) \in \mathcal C_y \ \ \ \ S_y(\theta, \phi)&=T_b(2m-1,n);\nonumber \\
\text{ for  }(\theta, \phi) \in \mathcal C_z \ \ \ \ S_z(\theta, \phi)&=T_c(2m-1,n).
\end{align}
where $\cos^2 \theta/2=1-(2m-1)/N$ and $\phi/2\pi=n/N$, $1 \le m \le N/2$, $1\le n\le N$. In the first of these, the $\hat a$ axis in Fig \ref{quatern} corresponds to the $\hat x$ axis; in the second, it corresponds to the $\hat y$ axis; in the third it corresponds to the $\hat z$ axis.  Hence, $S_x$, $S_y$ and $S_z$ are defined at those points $p$ of $\mathbb S^2$ corresponding to 
\begin{align}
\mathcal F_x &=\{p(\theta, \phi): (\theta, \phi) \in \mathcal C_x, \ \theta \in X_2, \phi \in X_1\} \nonumber \\
\mathcal F_y &=\{p(\theta, \phi): (\theta, \phi) \in \mathcal C_y, \ \theta \in X_2, \phi \in X_1\}\nonumber \\
\mathcal F_z &=\{p(\theta, \phi): (\theta, \phi) \in \mathcal C_z, \ \theta \in X_2, \phi \in X_1\}
\end{align}
respectively. We now have the central result:
 \begin{theorem}
 \label{disjoint}
 The skeletons $(\mathcal F_x, \mathcal F_y, \mathcal F_z)$ are pairwise disjoint. 
 \end{theorem}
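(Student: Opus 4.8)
The plan is to pass from the intrinsic colatitude/longitude data to the ambient Cartesian coordinates of a point $p=(X,Y,Z)\in\mathbb S^2$ along $(\hat x,\hat y,\hat z)$, and read off membership in a skeleton as rationality statements. By construction $\cos\theta_x=X=p\cdot\hat x$, $\cos\theta_y=Y$, $\cos\theta_z=Z$, while in $\mathcal C_x$ the longitude $\phi_x$ is the polar angle of the projection $(Y,Z)$ of $p$ onto the plane orthogonal to $\hat x$ (with $p_y$ at $\phi_x=0$, $p_z$ at $\phi_x=\pi/2$), so that $\cos\phi_x=Y/\sin\theta_x$, $\sin\phi_x=Z/\sin\theta_x$ and $\sin^2\theta_x=Y^2+Z^2$. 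I would then argue by contradiction: suppose some $p$ lies in $\mathcal F_x\cap\mathcal F_y$. Then $\theta_x,\theta_y\in X_2$ forces both $X$ and $Y$ to be rationals of the form $1-(2m-1)/N'$ with $N'=N/2$, and hence $Z^2=1-X^2-Y^2\in\mathbb Q$ as well. It is precisely this \emph{double} membership that makes the number theory bite: one membership alone leaves $Z^2$ irrational and nothing follows.

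The key device is to double the longitude. From the identity $\cos2\phi_x=(Y^2-Z^2)/(Y^2+Z^2)$ and the rationality of $Y^2$ and $Z^2$ we get $\cos2\phi_x\in\mathbb Q$; and $\phi_x\in X_1$ gives $2\phi_x/2\pi=n/N'\in\mathbb Q$. Applying Niven's Theorem (Lemma \ref{niven}) to the angle $2\phi_x$, and discarding the exceptions $\pm\tfrac12$ because $N'=2^{M-1}$ is not divisible by $3$, yields the trichotomy $\cos2\phi_x\in\{0,+1,-1\}$. I would dispatch the two endpoints first. If $\cos2\phi_x=-1$ then $Y=0$, i.e.\ $\cos\theta_y=0$, which is impossible since $2m-1=N'$ has no solution for even $N'$, so $\cos\theta=0\notin X_2$. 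If $\cos2\phi_x=+1$ then $Z=0$ and $X^2+Y^2=1$ with $X,Y$ of $X_2$ form; writing $N'X=-(2m-1-N')$ and similarly for $Y$ turns this into $(2m-1-N')^2+(2m'-1-N')^2=N'^2$, the very equation shown impossible modulo $4$ in the proof of Theorem \ref{123}.

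The residual branch $\cos2\phi_x=0$ is where I expect the real work. Here $Y^2=Z^2$, so $Z$ is rational and now all three of $X,Y,Z$ are rational. I would then run the identical doubling argument in the frame $\mathcal C_y$: the analogous relation expresses $\cos2\phi_y$ as a ratio of $\{X^2,Z^2\}$ differences and sums, hence $\cos2\phi_y\in\mathbb Q$, while $\phi_y\in X_1$ makes $2\phi_y$ a rational multiple of $2\pi$. Niven again forces $\cos2\phi_y\in\{0,\pm1\}$, whose endpoints give $X=0$ or $Z=0$ — both already excluded ($X=\cos\theta_x\ne0$, and $Z^2=Y^2\ne0$) — leaving $X^2=Z^2$. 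Combining $X^2=Y^2=Z^2$ with $X^2+Y^2+Z^2=1$ gives $X^2=\tfrac13$, so $X=\pm1/\sqrt3$ is irrational, contradicting $X\in X_2$. This closes the case and shows $\mathcal F_x\cap\mathcal F_y=\emptyset$.

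Finally, since the labels $x,y,z$ enter the construction only through the orthonormal triple, the cyclic symmetry of the setup immediately upgrades this to $\mathcal F_y\cap\mathcal F_z=\mathcal F_z\cap\mathcal F_x=\emptyset$, giving pairwise disjointness with no extra computation. The main obstacle, as noted, is the middle branch $\cos2\phi_x=0$: unlike the two endpoints it cannot be eliminated inside a single coordinate frame, and it is the step that genuinely needs a second application of Niven's Theorem (in $\mathcal C_y$) together with the irrationality of $1/\sqrt3$. The two facts I would be most careful to verify are that $N$ being a power of $2$ really does exclude the Niven values $\cos=\pm\tfrac12$ for the doubled longitudes, and that $\cos\theta=0$ never lies in $X_2$, since both are load-bearing in the argument above.
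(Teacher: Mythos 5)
Your proof is correct, and at its core it runs on the same engine as the paper's: your Cartesian identities (e.g.\ $\cos\theta_y = Y = \sin\theta_x\cos\phi_x$) are exactly the spherical cosine rule (\ref{cos}) applied to the right triangle formed by $p$ and two poles, and the key step is identical --- rationality of the relevant squared cosines forces $\cos 2\phi_x\in\mathbb Q$, Niven's theorem (Lemma \ref{niven}) together with $N=2^M$ (so $3\nmid N$) forces $\cos 2\phi_x\in\{0,\pm1\}$, and the surviving cases are killed by the parity arguments of Theorem \ref{123}. Both facts you flagged for verification do hold: $\cos 2\phi_x=\pm\tfrac12$ would require $3\mid N$, and $\cos\theta=0$ would require $2m-1=N'$, odd equal to even. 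The one place you genuinely diverge from the paper is the middle branch $\cos 2\phi_x=0$: the paper closes it inside a single frame by substituting $\cos^2\phi=1/2$ into the squared cosine rule, obtaining $2(2m-1-N')^2+(2m'-1-N')^2=N'^2$, whose left side is odd while the right side is divisible by $4$; you instead pass to the frame $\mathcal C_y$, apply Niven a second time, and finish with the irrationality of $1/\sqrt{3}$. Your detour is valid, but your claim that this branch ``cannot be eliminated inside a single coordinate frame'' is mistaken: substituting $Z^2=Y^2$ into $X^2+Y^2+Z^2=1$ gives $X^2+2Y^2=1$, i.e.\ $(2m-1-N')^2+2(2m'-1-N')^2=N'^2$, which dies by exactly the same odd-versus-divisible-by-four parity contradiction as your $\cos 2\phi_x=+1$ endpoint --- this one-line argument is what the paper does, and it makes the second Niven application and the $\sqrt{3}$ step unnecessary (though correct).
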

 \begin{proof}
Start by considering the intersection of $\mathcal F_x$ and $\mathcal F_z$. Let $p(\theta, \phi)$ denote a point in  $\mathcal F_z$, so that the angular distance of $p$ from $p_z$ is $\theta$ and the angular distance of $p$ from $p_x$ is $\theta'$. Consider the spherical triangle $\triangle(p,p_x,p_z)$ - see Fig \ref{sphtri}. By the cosine rule for spherical triangles (where the angular distance between $p_x$ and $p_z$ is $\pi/2$ radians):
\be
\label{cos}
\cos \theta' = \sin \theta \cos \phi
\ee
Squaring (\ref{cos}) and using Lemma \ref{niven}, then $2 \phi$ can only equal $0, \pi/2, \pi, 3\pi/2, 2\pi$ radians. The angles $\phi=0, \pi/2, \pi \ldots$ are immediately ruled out by Theorem \ref{123}. If $\phi= \pi/4, 3\pi/4 \ldots$ then, squaring (\ref{cos}), 
\be
2(2m-1-N')^2+(2m'-1-N')^2=N'^2
\ee
where again, $N'=N/2$. Similar to part of the proof to Theorem \ref{123}, the right hand side is divisible by 4, but the left hand side cannot be. Hence the intersection between $\mathcal F_x$ and $\mathcal F_z$ is the empty set. Similar arguments shown that the intersection between $\mathcal F_y$ and $\mathcal F_z$ and between $\mathcal F_x$ and $\mathcal F_y$ is also the empty set.
\end{proof}
\begin{figure}
\centering
\includegraphics[scale=0.3]{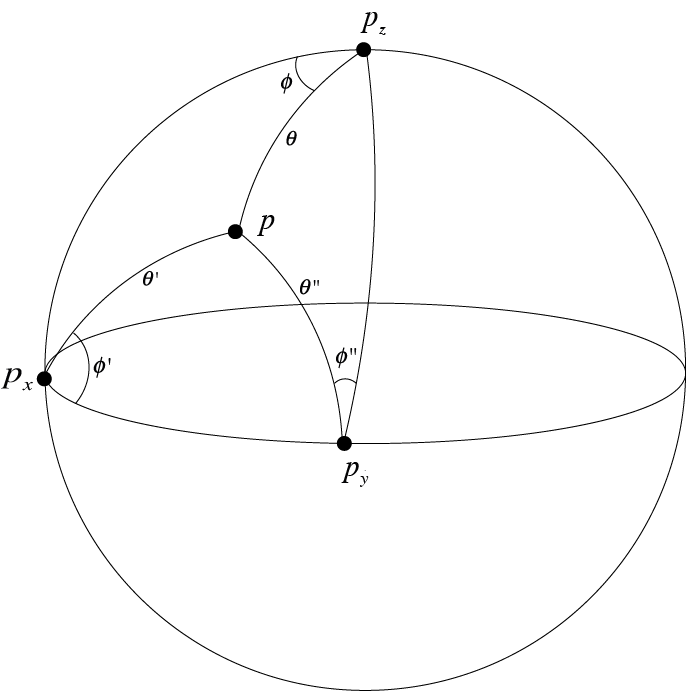}
\caption{\emph{Analysis of the properties of spherical triangles formed from the points $p$, $p_z$, $p_x$ and $p_y$ is essential to explore the consequences of this finite representation of the Bloch sphere.}}
\label{sphtri}
\end{figure}
In fact a more general result is provable. Let $p$ and $p_O$ denote points on $\mathcal F_x$ with coordinates $(\theta, \phi)$ and $(\Theta, \Phi)$ respectively. Consider the spherical triangle $\triangle(p, p_O, p_x)$ and suppose the angular distance between $p$ and $p_O$ equals $\theta'$. Then from the cosine rule for spherical triangles:
\be
\label{cossq}
\cos \theta'= \cos\theta \cos \Theta+\sin\theta \sin\Theta \cos (\phi-\Phi)
\ee
Rearranging and squaring (\ref{cossq}):
\be
\label{cossq2}
\cos^2 (\phi -\Phi )= 
\frac{(\cos \theta' - \cos \theta \cos \Theta)^2}{(1-\cos^2 \theta)(1-\cos^2 \Theta)}
\ee
Now consider the skeleton $\mathcal F_O$ associated with $p_O$. If $p \in \mathcal F_O$ then both the numerator and the denominator of the right-hand side of (\ref{cossq2}) would be rational. Hence, by Lemma \ref{niven}, $\phi-\Phi$ and hence $\phi$ must typically be irrational, whence $p$ cannot simultaneously belong to $\mathcal F_x$ and $\mathcal F_O$.  

Theorem \ref{disjoint} states that if we take a point $p$ that lies, say, on $\mathcal F_x$, then it cannot lie on either $\mathcal F_y$ and $\mathcal F_z$. As discussed in Section \ref{physics}, this provides the number-theoretic basis for complementarity and contextuality in the present theory.

\subsection{Relation to Complex Hilbert Vectors}
Using the quaternionic representations (\ref{quat}), the familiar Pauli spin matrices can be written in (bit-string) operator form
\be
\label{Pauli}
\sigma_x= \begin{pmatrix}
i & 0 \\
0 & i
\end{pmatrix}
\mathbf{i}_1;\ \ 
\sigma_y= \begin{pmatrix}
i & 0 \\
0 & i
\end{pmatrix}
\mathbf{i}_2;\ \ 
\sigma_z= \begin{pmatrix}
i & 0 \\
0 & i
\end{pmatrix}
\mathbf{i}_3.
\ee
With $(|\uparrow\rangle, |\downarrow\rangle)$ the eigenvectors of $\sigma_z$, $(|\rightarrow\rangle, |\leftarrow\rangle)$ the eigenvectors of $\sigma_x$ and $(|\odot\rangle, |\otimes\rangle)$ the eigenvectors of $\sigma_y$ (where the symbols reflect directions in space), we can identify the symbolic labels \cite{Schwinger}  $c$ and $\cancel c$ with $\uparrow$ and $\downarrow$, $a$ and $\cancel a$ with $\rightarrow$ and $\leftarrow$, and $b$ and $\cancel b$ with $\odot$ and $\otimes$, respectively (c.f. (\ref{sz})). 

Using this, families of $N-$element bit strings can be associated with complex Hilbert vectors as follows:
\begin{align}
\label{1qubit}
\mathscr S_x(\theta', \phi')& \equiv \cos \frac{\theta'}{2} |\rightarrow\rangle + e^{i \phi'} \sin \frac{\theta'}{2} |\leftarrow\rangle  \nonumber \\
\mathscr S_y(\theta'', \phi'') & \equiv \cos \frac{\theta''}{2} |\odot\rangle + e^{i \phi''} \sin \frac{\theta''}{2} |\otimes\rangle  \nonumber \\
\mathscr S_z(\theta, \phi) & \equiv \cos \frac{\theta}{2} |\uparrow\rangle + e^{i \phi} \sin \frac{\theta}{2} |\downarrow \rangle 
\end{align}
Here $\mathscr S_x$ (similarly $\mathscr S_y$ and $\mathscr S_z$) denotes an equivalence class of bit strings. In particular, $S_x(\theta, \phi)$ and $S'_x(\theta, \phi)$ belong to $\mathscr S_x(\theta, \phi)$ if there exists a `global' permutation $P$, i.e. one independent of $\theta$ and $\phi$, such that $S'_x(\theta, \phi)=P \circ S_x(\theta, \phi)$. In the physical model discussed in Section \ref{physics}, this captures the notion that two Hilbert vectors are physically equivalent if they differ by a global phase factor.  Other properties can be noted:
\begin{itemize}
\item The square of the amplitude $\cos \theta/2$ (etc) in the complex Hilbert vector is equal to frequency of occurrence $1-(2m-1)/N$ of the symbolic label $a$ in the associated bit string. 
\item From a multiplicative point of view, the permutation operator $\zeta$ is an $N$th root of unity and can be associated with the complex number $e^{2 \pi i /N}$. 
\item According to Theorem \ref{disjoint}, a point $p \in \mathbb S^2$ associated with one of the three  Hilbert vectors in (\ref{1qubit}) cannot be associated with either of the other two Hilbert vectors. We relate this to the notions of complementarity and contextuality in Section \ref{physics} below. 
\end{itemize}

\subsection{The Uncertainty Principle}
\label{hilbert}

If we simply substitute the digit $1$ for $a$ and the digit $-1$ for $\cancel a$, then the mean value $\bar S_x(\theta, \phi)$ of $S_x(\theta, \phi)$ over all $N$ elements of the bit string is equal to $\cos \theta$ (similarly for $\bar S_y$ and $\bar S_z$). Moreover, the standard deviation $\Delta S_x$ of $S_x(\theta, \phi)$ over all elements of the bit string is equal to $\sin \theta $ (similarly for $\Delta S_y$ and $\Delta S_z$). 

A quantum-like `uncertainty principle' arises from these properties, together with a simple geometric property of the triangle $\triangle(p\;p_xp_y)$ (Fig \ref{sphtri}). Now if $\theta'=\pi/2$ in Fig \ref{sphtri} then $\triangle(p\;p_xp_y)$ would contain two internal right angles and application of the cosine rule immediately gives $\theta''=\phi'$. However, if $\theta'$ is either greater or smaller than $\pi/2$ then $\theta'' > \phi'$. Hence we can write
\be
\label{ineq}
|\sin \phi'| \le |\sin \theta''|
\ee
Now using the cosine rule for the spherical triangle $\triangle(p\;p_x p_z)$ we have 
\be
\cos \theta= \sin \theta' \sin \phi'
\ee
Taking absolute values and substituting in (\ref{ineq}) we have
\be
\label{uncert1}
|\sin \theta'| |\sin \theta''| \ge |\cos \theta|
\ee
i.e.
\be
\label{uncert2}
\Delta S_x\Delta S_y \ge |\bar S_z|
\ee
If instead of $a=1$ we had written $a=\hbar/2$ then (\ref{uncert2}) would be replaced by:
\be
\label{uncert3}
\Delta S_x\Delta S_y \ge \frac{\hbar}{2} |\bar S_z|
\ee
Suppose $p \in \mathcal F_z$, so that the mean value $\bar S_z(\theta, \phi)$ is defined on $\mathcal F_z$, then by Theorem \ref{disjoint}, the standard deviations $\Delta S_x(\theta', \phi')$ and $\Delta S_y(\theta'', \phi'')$ are undefined on $\mathcal F_z$. This relates to the notion of contextuality discussed above (and in Section \ref{SG} below) and relates to the notion that when states are defined by Hilbert vectors, in quantum theory these states cannot be said to have simultaneously well-defined properties relative to non-commuting observables. On the other hand, for large enough $N$, any neighbourhood $\mathcal N_p$ of $p$, no matter how small, will contain points, some of which lie in $\mathcal F_z$, some in $\mathcal F_x$ and some in $\mathcal F_y$. Hence $\mathcal N_p$ contains points where collectively all of $\bar S_z$, $\Delta S_x$ and $\Delta S_y$ are defined and hence where (\ref{uncert1}) and (\ref{uncert2}) (and hence (\ref{uncert3})) are defined. This relates to the fact that in quantum theory, the uncertainty principle describes a property associated ensembles of experiments defined over non-commuting operators: the first ensemble estimating $\Delta S_x$, the second ensemble estimating $\Delta S_y$ and the third ensemble estimating $\bar S_z$. According to the analysis above, such distinct ensemble experiments must collectively satisfy the uncertainty relation (\ref{uncert3}). 

That is to say, in this bit-string representation of complex Hilbert vectors, the famous uncertainty relationships of quantum theory arise simply as consequences of the geometry of spherical triangles. 

\subsection{Relation to Meyer's Construction}
\label{meyer}
\begin{figure}
\centering
\includegraphics[scale=0.3]{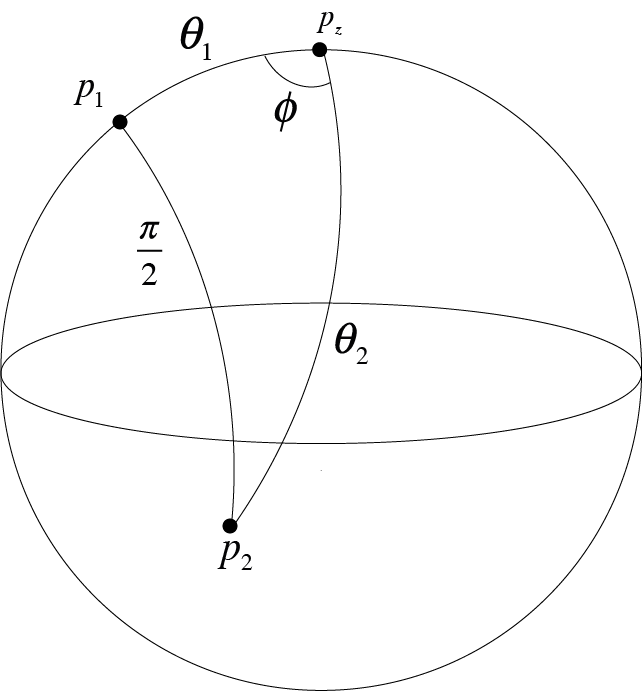}
\caption{\emph{A spherical triangle used to show that a finite skeleton such as $\mathcal F_z$ cannot contain triples of orthonormal points.}}
\label{sphtri2}
\end{figure}

This work can be related to a programme started by Meyer \cite{Meyer:1999} who showed it was possible to nullify the Kochen-Specker theorem by considering a rational set of triples of 'colourable' points which are dense in $\mathbb S^2$, each triple corresponding to an orthonormal vector triad in $\mathbb R^3$. However, there is a crucial difference with this and Meyer's construction:
\begin{theorem}
 No orthogonal triples of points exist in a finite skeleton such as $\mathcal F_z$. 
 \end{theorem}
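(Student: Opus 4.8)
The plan is to reduce the existence of an orthonormal triple in $\mathcal{F}_z$ to a single integer equation that can be defeated by a parity argument modulo $4$, exactly in the spirit of the proofs of Theorems \ref{123} and \ref{disjoint}. Suppose, for contradiction, that $p_1,p_2,p_3 \in \mathcal{F}_z$ are three points whose position vectors form an orthonormal triad in $\mathbb{R}^3$. Working in the coordinate system $\mathcal{C}_z$ (north pole at $p_z$), I would write each point as the unit vector $v_i = (\sin\theta_i\cos\phi_i,\ \sin\theta_i\sin\phi_i,\ \cos\theta_i)$, so that the third (polar) component of $v_i$ is exactly $\cos\theta_i$.

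First I would exploit the fact that the matrix $V = [\,v_1\ v_2\ v_3\,]$ whose columns are the $v_i$ is orthogonal, hence its rows are also orthonormal. Reading off the squared norm of the third row gives the single scalar identity $\cos^2\theta_1 + \cos^2\theta_2 + \cos^2\theta_3 = 1$. The key point is that this constraint involves only the colatitudes and is completely insensitive to the longitudes $\phi_i$; the longitude information, which carried the Niven-theorem content in the earlier results, plays no role here.

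Next I would substitute the explicit form of the admissible colatitudes. By the definition of $X_2$ with $N' = N/2 = 2^{M-1}$, every point of $\mathcal{F}_z$ has $\cos\theta_i = 1 - (2m_i-1)/N' = (N' - 2m_i + 1)/N'$, and since $N'$ is even while $2m_i-1$ is odd, the numerator $o_i := N' - 2m_i + 1$ is an odd integer with $|o_i| \le N'-1$. The identity above then becomes the integer equation $o_1^2 + o_2^2 + o_3^2 = N'^2$. Finally I would close with the parity argument: because $M \ge 2$, the right-hand side $N'^2 = 2^{2(M-1)}$ is divisible by $4$, whereas each $o_i$ is odd so $o_i^2 \equiv 1 \pmod 4$, giving $o_1^2 + o_2^2 + o_3^2 \equiv 3 \pmod 4$. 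This contradiction shows that no such triple can exist.

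I expect the only genuinely delicate step to be the geometric one: correctly justifying the passage from ``mutually orthogonal unit vectors'' to the row-orthonormality identity $\sum_i \cos^2\theta_i = 1$ (equivalently, recognising that the three polar components are themselves the entries of a unit vector), since everything after that is a short and robust congruence calculation. An alternative, closer to Fig \ref{sphtri2}, would fix two orthogonal points of $\mathcal{F}_z$, use the spherical cosine rule to pin down the unique third point completing the triad, and then invoke Lemma \ref{niven} to show its coordinates cannot have the required rational form; but the parity route above avoids any appeal to the longitudes and appears both shorter and more transparent.
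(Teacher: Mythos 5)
Your proof is correct, and it takes a genuinely different route from the paper's. The paper argues via spherical trigonometry: it applies the cosine rule to the triangle $\triangle(p_1,p_2,p_z)$, squares to deduce that $\cos 2\phi$ is rational, invokes Niven's theorem (Lemma \ref{niven}) to reduce to the cases $\phi=0,\ \pi/4,\ \pi/2,\ldots$, and then kills each case separately (via Theorem \ref{123}, via the exclusion of $\theta=\pi/2$ from $X_2$, and via a divisibility-by-16 computation, respectively). Your argument replaces all of this with the linear-algebraic observation that the matrix whose columns are three mutually orthogonal unit vectors is orthogonal, so its rows are orthonormal, yielding $\cos^2\theta_1+\cos^2\theta_2+\cos^2\theta_3=1$; with $\cos\theta_i=o_i/N'$, $o_i$ odd, $N'$ even, this becomes $o_1^2+o_2^2+o_3^2=N'^2$, which fails modulo $4$. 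This buys several things: it is shorter, needs no case analysis, does not use Niven's theorem at all, and, notably, never uses the longitude constraint $\phi_i\in X_1$ --- so you have in fact proved the stronger statement that no orthonormal triple exists even in the larger set of points whose colatitudes lie in $X_2$ with longitudes unrestricted. Conversely, the paper's proof buys something your argument does not: its contradiction is derived from just \emph{two} points of $\mathcal F_z$ at angular distance $\pi/2$ (the third point plays no role), so it establishes the stronger pairwise fact that no two points of $\mathcal F_z$ are orthogonal, whereas your identity genuinely requires all three points. The paper's route also stays within the toolkit (spherical triangles plus number theory of trigonometric values) that drives Theorems \ref{123} and \ref{disjoint}, which serves the paper's thematic point, but as a proof of the stated theorem your version is cleaner and more transparent.
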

 \begin{proof}
As above, the proof is based on the number theorems in Section \ref{number} again applied to the cosine rule for spherical triangles. Let us suppose that the orthonormal triple $(p_1, p_2, p_3)$ exists in $\mathcal F_z$, and hence in particular where the angular distance between $p_1$ and $p_2$ is $\pi/2$ radians (see Fig \ref{sphtri2}). Let $\phi$ denote the angle at $p_z$ between the longitudes which contain $p_1$ and $p_2$ respectively. Then, from the cosine rule applied to the triangle $\triangle(p_1,p_2,p_z)$ (see Fig \ref{sphtri2}):
\be
 \label{triple}
 0=\cos \theta_1 \cos \theta_2+ \sin \theta_1 \sin \theta_2 \cos \phi
 \ee
If we square this equation, then necessarily $\cos 2\phi$ must be rational. Hence $\phi=0, \pi/4, \pi/2 \ldots$ by Lemma \ref{niven}. 

\begin{itemize}
\item If $\phi=0$ then $\theta_{2}=\theta_1+\pi/2$ and hence $\cos \theta_{1}=-\sin \theta_2$. This has been ruled out by Theorem \ref{123}. 
 
\item If $\phi=\pi/2$, then we have $0=\cos \theta_1 \cos \theta_{2}$ and either $\theta_1$ or $\theta_{2}=\pi/2$. However, from (\ref{X}), $X_2$ does not contain such an angle. 

\item The remaining possibility is $\phi=\pi/4$. Squaring (\ref{triple}) with $\cos^2 \phi =1/2$ we have
\be
\label{trip}
\cos^2 \theta_1 \cos^2 \theta_2 + \cos^2 \theta_1 + \cos^2 \theta_2=1
\ee
From (\ref{X}), we now substitute $1-\frac{2m_1-1}{N'}$ for $\cos\theta_1$ and $1-\frac{2m_2-1}{N'}$ for $\cos\theta_2$ giving  
 \be
 (2m_1-1-N')^2(2m_2-1-N')^2+N'^2(2n_1-1-N')+N'^2(2n_2-1-N)=N'^4
 \ee
Since $N'=N/2$ is even, the right-hand side is divisible by 16. Expanding out the terms, it is clearly seen that the left-hand side is not divisible by 16. Hence $\phi \ne \pi/4$. 
\end{itemize}
\end{proof}
Hence there are no orthonormal triples. In this sense, the present construction does not nullify the Kochen-Specker theorem. Indeed, as discussed, the construction is explicitly contextual in character: if a point $p$ lies in $\mathcal F_z$ (and so has well defined symbolic attributes relative to the $z$ direction) it does not lie in $\mathcal F_x$ (and so does not have symbolic attributes relative to the $x$ direction). On the other hand, as discussed below, the present construction does largely nullify the Bell Theorem, whilst the Meyer construction does not. 
\subsection{Fine Tuning}
The construction above appears to be rather finely tuned; for large $N$, seemingly tiny perturbations can take a point off the discretised skeleton. However, such a conclusion depends on a choice of metric. In the next Section we discuss an interpretation of this discretisation, where points on the discretised Bloch sphere describe ensembles of trajectories on a fractal subset of state space. As discussed below, using a $p$-adic metric \cite{Katok} which respects the primacy of this fractal, any perturbation which takes a state off the fractal subset is a large amplitude perturbation by definition, even though it may appear small from a Euclidean perspective. In this sense, the discretisation of the sphere described above is not fine tuned. 

\section{The Physics of Finite Single Qubit States}
\label{physics}
\subsection{Finite Hilbert Vectors and Invariant Set Theory}
\label{invariant}

This work was largely motivated by the elementary observation that the form
\be
\label{L}
\frac{\partial \rho}{\partial t} =  \{H,\rho\}
\ee
of the Liouville equation
in classical physics is simply too close to the von Neurmann form 
\be
\label{S}
i\hbar\frac{\partial \rho}{\partial t} =  [H,\rho]
\ee
of the Schr\"{o}dinger equation to be a coincidence. Hence, just as the linearity of the Liouville equation - a consequence of conservation of probability - says nothing about the nonlinearity or determinism of the dynamical system that generates the probability density field, we should similarly not infer from the linearity of the Schr\"{o}dinger equation that fundamental physics can be neither nonlinear nor deterministic. However, in seeking a nonlinear deterministic underpinning for the Schr\"{o}dinger equation, it is clearly vital to understand why the constants $\hbar$ and $i$ appear in (\ref{S}) and not (\ref{L}) - and  why (\ref{S}) and not (\ref{L}) is formulated in the language of Hilbert Space states and operators. The constructive representation of complex Hilbert states discussed above provides a novel perspective on these issues. 

In classical theory, the only state-space trajectory that matters is the one with least action - the others play no role. In quantum theory this is not so as the Feynman path integral approach indicates explicitly. The notion of the importance of extended state-space structure is manifest in other approaches to quantum theory. For example in Bohmian theory the classical Hamilton-Jacobi equation is supplemented with the gradient of the quantum potential, the latter being a function on configuration space. All this is consistent with the fact that the essential constant, $\hbar$, has dimensions of momentum times position, i.e. of phase space. 

The essential idea developed in this Section is that processes occurring in space-time are associated with the extended state-space structure of a primal fractal-like geometry $I_U$ in state space. In this way, the nonlinear deterministic dynamics that underpins the Schr\"{o}dinger equation is defined by the fractal structure of state-space trajectories on $I_U$ (see Fig \ref{helix}). In particular, suppose that at some $I$th level of fractal iteration, the trajectory of a physical system in its state space is represented as a simple 1D curve. Suppose (at this level of approximation) this trajectory bifurcates into two, each bifurcated trajectory reaching and becoming quasi-stationary in distinct parts of state space labelled $a$ and $\cancel a$. At this level of approximation, the dynamical evolution of the system is reminiscent of Everettian branching. However, in the model developed here, such branching arises simply because we are looking at the trajectories at an inadequate level of fractal iteration. Suppose at the $I+1$th level of fractal iteration, the $I$th-iterate trajectory is found to comprise a set of $N$ trajectories, drawn as a helix in Fig \ref{helix}.  At this $I+1$th iterate, the trajectory segments do not bifurcate as they approach $a$ and $\cancel a$, but instead diverge (consistent with a simple deterministic instability). As shown in Fig \ref{helix}, each trajectory in the helix can be labelled by the state-space cluster to which it evolves - that is to say, each trajectory within the helix at the $I+1$th iteration can be given the symbolic label $a$ or $\cancel a$. 
\begin{figure}
\centering
\includegraphics[scale=0.3]{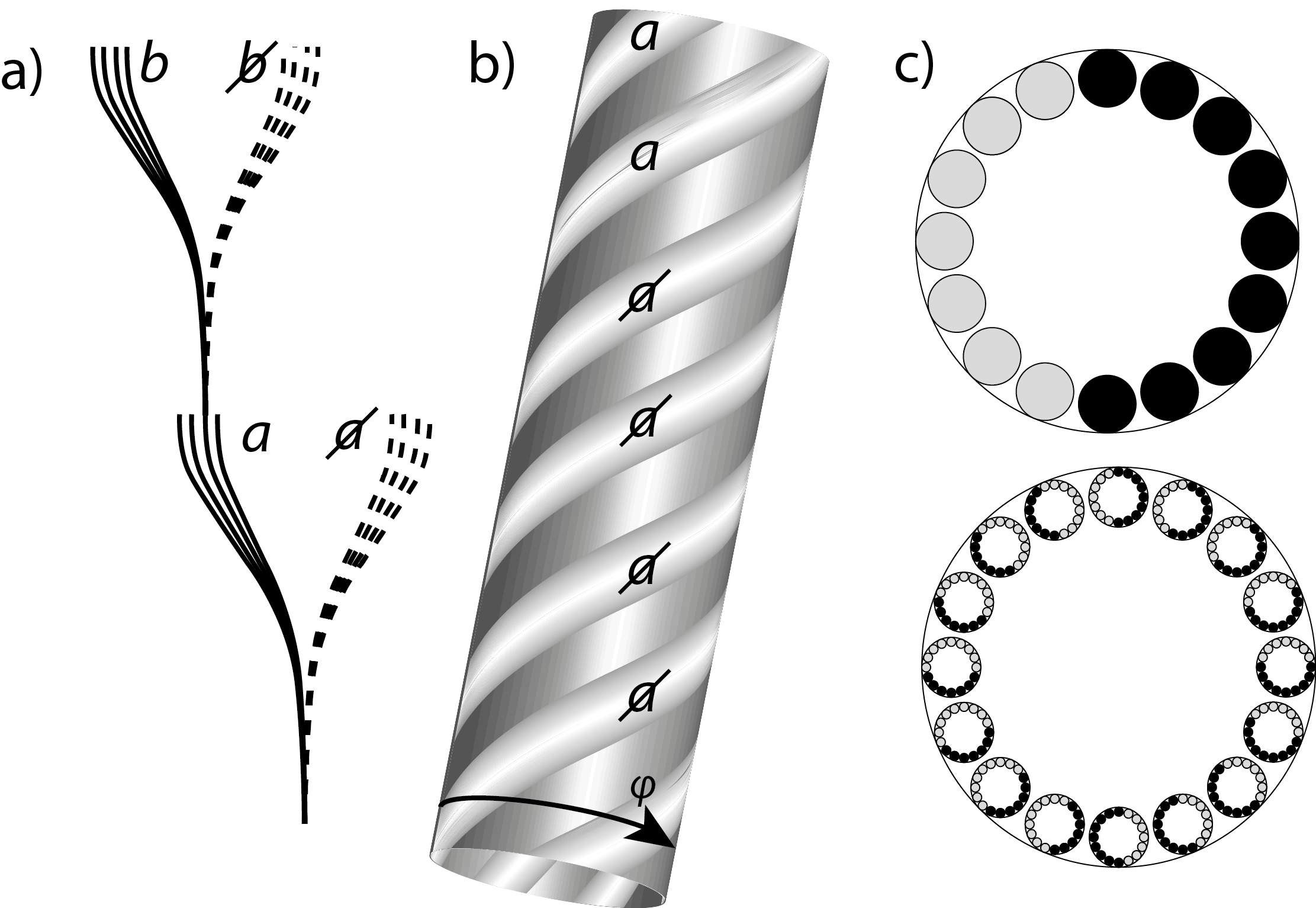}
\caption{\emph{A schematic of the local state-space structure of the invariant set. a) An ensemble of trajectories decoheres into two distinct regions of state space labelled $a$ and $\cancel a$. Under a second phase of decoherence, this trajectory, itself comprising a further ensemble, decoheres into two further distinct regions of state space labelled $b$ and $\cancel b$. b) The fractal structure of trajectories is such that under magnification in state space, a trajectory segment is found to comprise a helix of $N$ trajectories at the next fractal iterate. c) Top: a cross section through the helix of trajectories comprises $N=16$ disks coloured black or grey according to whether that trajectory evolves to the $a$ regime or the $\cancel a$ regime. Bottom: each of these $N$ disks itself comprises $N$ further disks coloured black or grey according to whether each trajectory evolves to the $b$ or $\cancel b$. A fractal set of disks is homeomorphic to the set of $N$-adic integers.}}
\label{helix}
\end{figure}

Continuing in this way, it is supposed that each trajectory of the helix at the $I+1$th iteration itself comprises a helix of $N$ trajectories at the $I+2$th iteration. Each of these $I+2$th interate trajectories evolve to the distinct regions $b$ and $\cancel b$ of state space (Fig \ref{helix}a), and so on. A cross section of the fractal helix is a topological representation of the set of $N=2^M$-adic (and hence 2-adic) integers \cite{Katok}. From the perspective of the $2$-adic metric, a point not lying on the iterated fractal is necessarily distant from a point on the set (no matter how close the two points are from a Euclidean perspective). The primacy of the $N$-adic distance in state-space is a key reason why invariant set theory is not a simple classical theory. $p$-adic analysis has already been applied to formulate alternative approaches to quantum theory \cite{Khrennikov} \cite{Volovich}. As these authors also acknowledge, such an approach is indicative of the primal role that fractal geometry must be playing in fundamental physics. In the present model, such geometry primarily arises in state space, rather than space-time. 

As mentioned, the labels $a$ and $\cancel a $ refer to specific subsets of some two dimensional cross section of (cosmological) state space, associated with the clustering of trajectories. Here we assume that when two trajectories in Fig \ref{helix} have diverged sufficiently from one another, the system's environment also evolves differently (c.f. the butterfly effect \cite{Palmer:2014b}). That is to say, it is assumed that the state-space clustering associated with $a$ and $\cancel a$ extend into multiple dimensions of state space, so that $a$ and $\cancel a$ are associated with distinct macroscopic outcomes (e.g. particle detected by macroscopic detector, or particle not detected by macroscopic detector). We do not attempt to describe this divergence and clustering process in closed mathematical form in this paper, though note that this may best be achieved using $p$-adic dynamical systems analysis \cite{WoodcockSmart}. Also, the extent to which the regions $a$ and $\cancel a$ involve sufficiently many of the environmental degrees of freedom as to become gravitationally distinct \cite{Diosi:1989} \cite{Penrose:1994}, a matter of considerable importance at a foundational level, is beyond the scope of this paper, except to say that the process of trajectory clustering into distinct regimes could conceivably provide an emergent description of gravity.  We leave these issues to another paper. 

In quantum physics we typically consider systems that have been prepared in some initial state, are then subject to various transformations, following which certain measurements are made. In this framework, preparation can be described by first considering an $I$th trajectory which has already evolved to an $a$ cluster. As described, this trajectory comprises a helix of $I+1$th iterate trajectories, $N-m$ of which evolve to some cluster $b$ and $m$ of which evolve to some cluster $\cancel b$. That is to say, all $I+1$th iterate trajectories can be labelled $a$ (the preparation eigenstate) and $N-m$ of them can be labelled $b$ and $m$ of them labelled $\cancel b$ (the measurement eigenstates). 

As discussed in Section \ref{finite}, the $N$-element bit strings such as $S_z(\theta, \phi)$ etc have quaternionic structure. That is to say, the precise order in which the helical trajectories are linked to one cluster or the other, as one rotates about the helix, encodes a measurement orientation $(\theta, \phi)$ in physical space, relative to the north pole preparation direction. This encodes the concept, outlined above, that an element of reality in space-time is encoded in the geometry of an extended region of state space. According to the analysis in Section \ref{finite}, a proportion $\cos^2 \theta/2 =1-\frac{2m-1}{N}$ of trajectories corresponding to the point $(\theta, \phi)$ are labelled with the $b$ label. In this way, the geometry of $I_U$ encodes the statistics of quantum measurement. If instead of a single system, we consider an ensemble of systems all prepared in (nominally but not precisely) the same way, then the ensemble of $a$ and $b$ orientations will lie in some small finite neighbourhoods $\mathcal N_a$ and $\mathcal N_b$, whose size is inversely proportional to the finite precision of the preparation and measurement devices. 

We now discuss Theorem \ref{disjoint} from a physical perspective. In quantum theory it is frequently said that if a particle's spin is measured relative to some direction $x$ it cannot be `simultaneously' measured relative to some other direction. This notion of a simultaneous measurement can be equated to a counterfactual measurement. Theorem \ref{disjoint} indicates that such counterfactual measurements are undefined: suppose a particle's spin was measured in the $x$ direction, then according to Theorem \ref{disjoint} it is not the case that the particle's spin is well defined relative to the $y$ or $z$ directions. In the context of $I_U$, one is saying that if a measurement relative to the $x$ direction lies on $I_U$, then a measurement \emph{on that same particle} relative to the $y$ or $z$ direction cannot lie on $I_U$. This is consistent with the notion of measurement contextuality in quantum theory (and is in turn consistent with the fact discussed in Section \ref{meyer} that invariant set theory respects the Kochen-Specker theorem). 

These considerations must be viewed within the bigger picture where $I_U$ describes a global but compact fractal geometry of cosmological states \cite{Palmer:2009a} \cite{Palmer:2014}, analogous with the fractal attractors of nonlinear dynamical systems theory. $I_U$ is referred to as the (cosmological) invariant set, and embodies the Bohmian holistic notion of an undivided universe \cite{BohmHiley}. Here it is assumed that the universe evolves through a very large number of aeons so that in an ensemble of $N$ $I$th-iterate trajectories on $I_U$, one trajectory corresponds to the current aeon, and the others correspond to past or future aeons (see also Section \ref{Bell}). In this sense, ordering of points corresponding to the passage of time along trajectories corresponds to Bohmian explicate order, whilst ordering of points corresponding to ($N$-adic) distance transverse to trajectories corresponds to Bohmian implicate order. Henceforth we refer to this model of quantum physics as `invariant set theory'. If the number of distinct aeons is finite, $I_U$ is ultimately periodic in structure. 

Since the experimenter is part of the physical universe, the neuronal processes which determine her choice of measurement setting are also described by the geometry of $I_U$. Now if $N$ were a small number (e.g. the minimal value $N=4$), then the number of orientations available to the experimenter would be smaller than the number of orientations conceivable by her neuronal processes. This would contradict the fact that for all practical purposes the experimenter can orient her measuring apparatus in any way she chooses, and experimental results are particular to and consistent with that particular orientation.  

No such contradiction arises if $N$ is sufficiently large, and none of the results in this paper require $N$ to have any particular upper bound. Any measurement apparatus has finite resolution, and many theories of quantum gravity suggest that, linked to the Planck scale, there is an absolute maximum resolution any measuring apparatus can have \cite{hossenfelder}. If $N$ is so great that the angular distance between any consecutive angles in $X_1$, $X_2$ or $X_3$ is smaller than this maximum resolution, then for all practical purposes the experimenter can choose any measurement orientation she likes and the experimental results will be particular to and consistent with that orientation. It may seem superficially as if there would be no observable consequences if $N$ were large but finite in this sense, rather than being simply infinite (the singular limit where the closed Hilbert Space is recovered). However, below we show that there are consequences. 

Based on this discussion we conclude this section by itemising the basic postulates of invariant set theory:
\begin{itemize}
\item States of physical reality of the universe are those which lie on a special fractal-like set $I_U$ in cosmological state space. More generally, the laws of physics at their most primitive derive from the geometrical properties of $I_U$. 
\item Only states of physical reality encompass the notion of space time, which is presumed to be relativistic. Hypothetical state space perturbations which take points on $I_U$ to points off $I_U$ are not expressible as changes to events in space-time. 
\end{itemize}
In future papers, the discretisation of the Bloch sphere will be extended to include Lorentz boosts. In this way it is hoped to show explicitly how relativistic space-time is emergent from the spinorial structure of $_U$. It can be noted that the number theorems above readily extend from trigonometric to hyperbolic functions. 

\subsection{The Sequential Stern-Gerlach Experiment}
\label{SG}

As an illustration of the construction above, and the central importance of the rational constraint on $\cos \theta$, consider a conventional sequential Stern-Gerlach experiment used to introduce students to the non-commutativity of spin measurements in quantum theory and hence contextuality of the quantum state. An ensemble of spin-1/2 particles is prepared by the first of three Stern-Gerlach apparatuses with spin oriented in the direction $\hat{\mathbf {a}}$. Some exit through the spin-up channel and enter a second Stern-Gerlach apparatus oriented in the direction $\hat{\mathbf{b}}$. The particles that are output along the spin-up channel of the second apparatus are then passed into a third Stern-Gerlach apparatus oriented in the direction $\hat{\mathbf{c}}$. Here we can think of the first apparatus as preparing the state for a measurement by the second apparatus, and the second apparatus as preparing the state for a measurement by the third apparatus. 
\begin{figure}
\centering
\includegraphics[scale=0.3]{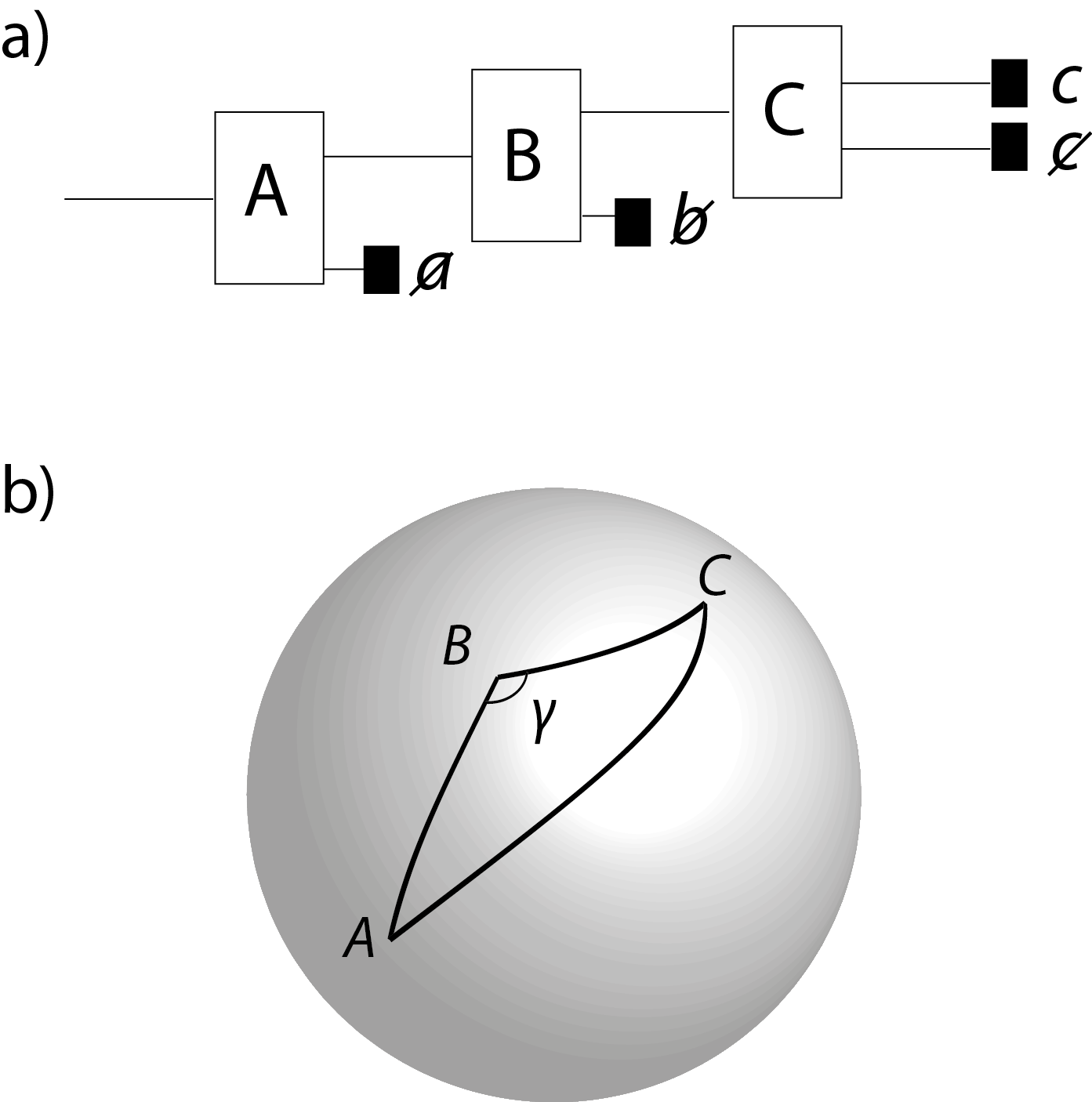}
\caption{\emph{a) A sequential Stern-Gerlach experiment where a particle is sent through three Stern-Gerlach devices. b) A, B and C shown as points on the celestial sphere. (Although to experimental accuracy, A, B and C may appear to lie on a great circle, they will not \emph{precisely}.) The incompatibility of non-commutating quantum observables in a deterministic framework is illustrated using a number-theoretic analysis of the cosine rule for spherical triangles.}}
\label{sgfig}
\end{figure}

In the discussion below, $\hat{\mathbf{a}}$, $\hat{\mathbf{b}}$ and $\hat{\mathbf{c}}$ are to be considered directions under the control of the experimenter. That is to say, they are nominal directions and can be thought of as varying slightly from one set of measurements to the next in the ensemble. Associated with the nominal directions $\hat{\mathbf{a}}$, $\hat{\mathbf{b}}$ and $\hat{\mathbf{c}}$ are three small neighbourhoods $\mathcal A$, $\mathcal B$ and $\mathcal C$ of points on $\mathbb S^2$ - the smallness of these neighbourhoods being determined by the finite precision of the Stern-Gerlach devices. For any set of measurements (on a particular particle) are three precisely defined points $A$, $B$ and $C$ drawn from each of the three neighbourhoods. In particular, although $\hat{\mathbf{a}}$, $\hat{\mathbf{b}}$ and $\hat{\mathbf{c}}$ may be coplanar to the precision that the experimenter can determine this,  a particular set $\{A, B, C\}$ of points will not lie \emph{precisely} on a single great circle and hence can be represented as the vertices of some non-degenerate spherical triangle $\triangle ABC$. Now as far as the experimenter is concerned, the nominal directions $\hat{\mathbf{a}}$, $\hat{\mathbf{b}}$ and $\hat{\mathbf{c}}$ can be set arbitrarily.  However, in invariant set theory, although the orientation associated with $A$ can be considered arbitrary (about which see below), the orientation associated with $B$ cannot: $\cos \theta_{AB}$, the cosine of the angular distance between $A$ and $B$, must be rational. Similarly, although the orientation of the great circle between $A$ and $B$ can be considered arbitrary, the orientation $\gamma$ of the great circle between $B$ and $C$, relative to the great circle between $A$ and $B$ is not arbitrary: $\gamma/2\pi$ must be rational. Finally, $\cos \theta_{BC}$, the cosine of the angular distance between $B$ and $C$ must be rational. 

Let us start by considering a simple two-stage Stern-Gerlach experiment $\hat{\mathbf{a}} \rightarrow \hat{\mathbf{b}}$. From quantum theory the frequency of spin up measurements is equal to $\cos^2 \theta/2$ where $\theta$ is the nominal angle between $\hat{\mathbf{a}}$ and $\hat{\mathbf{b}}$ . The same statistical result obtains in invariant set theory.  In invariant set theory, the experimenter does not need to take special care that $\cos \theta_{AB}$ is rational - this is beyond her control and is an automatic consequence of the postulate that the state of the universe lies on the invariant set. Hence, in an ensemble of measurements with nominal directions $\hat{\mathbf{a}}$ and $\hat{\mathbf{b}}$, the precise rational value of $\cos \theta_{AB}$ will vary from one measurement to the next. The average of $\cos \theta_{AB}$ over such a sample of measurements is equal to $\cos \theta$. 

The rationality conditions plays a crucial role when considering the order of measurements in a three-stage Stern-Gerlach device. Consider the strictly counterfactual question: although the experiment $\hat{\mathbf{a}} \rightarrow \hat{\mathbf{b}} \rightarrow \hat{\mathbf{c}}$ was performed on a particular particle, could the experiment $\hat{\mathbf{a}} \rightarrow \hat{\mathbf{c}} \rightarrow \hat{\mathbf{b}}$ have been performed on that particle? That is to say, is the order of the $\hat{\mathbf{b}}$ and $\hat{\mathbf{c}}$ Stern-Gerlach devices important? To see that it is, consider the cosine rule for spherical triangles
\be
\cos \theta_{AC}= \cos \theta_{AB}\cos \theta_{BC}+\sin \theta_{AB}\sin \theta_{BC}\cos \gamma
\ee
The right hand side is the sum of two terms. The first is rational since it is the product of two terms each of which, by construction, is rational. The second is the product of three terms the last of which, $\cos \gamma$,  is irrational ($\gamma \ne 180 ^\circ$ precisely). Since $\theta_{AB}$, $\theta_{BC}$ and $\gamma$ are independent degrees of freedom defining the triangle $\triangle ABC$, $\sin \theta_{AB}$ and $\sin \theta_{BC}$ cannot conspire with $\cos \gamma$ to make the product $\sin \theta_{AB}\sin \theta_{BC}\cos \gamma$ rational. Hence $\cos \theta_{AC}$ is the sum of a rational number and an irrational number and is therefore irrational. Therefore, if $U$ is a universe in which $\hat{\mathbf{a}} \rightarrow \hat{\mathbf{b}} \rightarrow \hat{\mathbf{c}}$ is performed on a particular particle, and therefore lies on $I_U$, the counterfactual universe $U'$ where $\hat{\mathbf{a}} \rightarrow \hat{\mathbf{c}} \rightarrow \hat{\mathbf{b}}$ is performed on the same particle cannot lie on $I_U$. In invariant set theory, the non-commutativity of spin measurement arises by number theory applied to the cosine rule for spherical triangles. From the perspective of the 2-adic metric discussed in Section \ref{invariant}, the state of the universe corresponding to the counterfactual $\hat{\mathbf{a}} \rightarrow \hat{\mathbf{c}} \rightarrow \hat{\mathbf{b}}$ experiment is distant from the state corresponding to the actual $\hat{\mathbf{a}} \rightarrow \hat{\mathbf{b}} \rightarrow \hat{\mathbf{c}}$ experiment, even though from a Euclidean perspective the angular differences between the orientations of the measuring devices may be tiny. 

We can of course perform two non-simultaneous sequential Stern-Gerlach experiments on different particles - the first with order $\hat{\mathbf{a}} \rightarrow \hat{\mathbf{b}} \rightarrow \hat{\mathbf{c}}$, the second with order $\hat{\mathbf{a}} \rightarrow \hat{\mathbf{c}} \rightarrow \hat{\mathbf{b}}$. For the first experiment, $\cos \theta_{AB}$ and $\cos \theta_{BC}$ are rational, and the angle subtended at $B$ is a rational multiple of $2 \pi$. For the second experiment, $\cos \theta_{A'C'}$ and $\cos \theta_{B'C'}$ are rational, and the angle subtended at $C'$ is a rational multiple of $2 \pi$. Here $A, A' \in \mathcal A$, $B,B' \in \mathcal B$ and $C,C' \in \mathcal C$.  

Before concluding this Section, it is worth commenting on the output probabilities for a single Stern-Gerlach device A. Here a particle leaves the source with spin oriented in some direction $(\theta_{\text{ref}}, \phi_{\text{ref}})$ represented by some point $P \in \mathbb S^2$. This direction is an expression of the relationship between the single qubit under consideration, relative to the rest of $I_U$. It can be considered part of the information embodied in a supplementary variable $\lambda$ representing that qubit, and, as far as the experimenter is concerned, is unknown (and, indeed, as discussed in Section \ref{Bell} is unknowable). For the Stern-Gerlach measurement, invariant set theory requires that $\cos \theta_{PA}$ is rational. For an ensemble of particles the unknown directions $\{P_i\}$ can be assumed random with uniform distribution on the sphere. Hence, one can expect half of the particles to emerge in the spin-up output branch of the Stern-Gerlach device, and half in the spin-down output branch. 

\section{Entangled Qubits}
\label{implications}

\subsection{Correlated bit string and Hilbert state tensor products}
\label{entangle}

In this finite representation, the state space associated with $J$ entangled qubits is simply the $J$-times Cartesian products of the finite Bloch Sphere as described above. This contrast strongly with quantum theory, and a potentially testable consequence of this is discussed below. 

Start by considering the Cartesian product $T_{ab}= T_a \times T_b$ of two bit strings $T_a=\{a_1 \; a_2 \ldots a_{N}\}$, $T_b=\{b_1 \; b_2 \ldots b_{N}\}$ where $a_i \in \{a, \cancel a\}$, $b_i \in \{b, \cancel b\}$ and
\begin{align}
\label{c1}
T_{ab}(m_1, m_2, m_3) &=\{\overbrace{ a \ a\  \ldots \  a\ \ \  a \  a \ \ldots \  a}^{N-m_1}\ \ \ \overbrace{\cancel a\ \cancel a\ \ldots \ \cancel a\ \ \cancel a\ \cancel a\ \ldots \ \cancel a}^{m_1} \}\nonumber \\
& \times \{\underbrace{b \  b \ \ldots \  \  b}_{N-m_2}\ \ \ \underbrace{\cancel b\  \cancel b\ \ldots \  \cancel b}_{m_2-m_1}\ \ \ \ 
\underbrace{\cancel b \ \cancel b \ \ldots \ \cancel b}_{m_3}\ \ \underbrace{b\ b\ \ldots \ b}_{m_1-m_3} \} \nonumber \\ 
&=T_a(m_1) \times \zeta^{m_1-m_3}T_{b}(m_2+m_3-m_1)
\end{align}
with three degrees of freedom represented by the integers $m_1, m_2, m_3$. Three cyclic permutation operators, $\zeta_1$, $\zeta_2$ and $\zeta_3$ can be defined on $T_{ab}$ without impacting on the correlation between $T_a$ and $T_b$. $\zeta_1$ is defined by
\begin{align}
\zeta^{n_1}_1 T_{ab}= \zeta^{n_1} T_a \times \zeta^{n_1} T_b
\end{align}
also represented as 
\begin{align}
\zeta^{n_1}_1 T_{ab}=&\{\overbrace{a \ a\  \ldots a \  \ \ \ a  \ a \ \ldots \ a \ \ \ \cancel a\ \cancel a\ \ldots \ \cancel a\ \ \ \  \cancel a\ \cancel a\ \ldots \ \cancel a}^{\zeta^{n_1}} \} \nonumber \\
 \times & \{\underbrace{ b \ \ b \ \ldots \  b\ \ \ \ \cancel b \ \cancel b\ \ldots \ \cancel b \ \ \ \cancel b \ \cancel b \ \ldots \ \cancel b \ \ \ \ \ b\  b\ \ldots \ b}_{\zeta^{n_1}} \} \nonumber
\end{align}
In terms of this representation, $\zeta_2$ and $\zeta_3$ are defined by
\begin{align}
\zeta_2^{n_2} T_{ab}=&\{\overbrace{a \ a\  \ldots a \  \ \ \ a  \ a \ \ldots \ a}^{\zeta^{n_2}} \ \ \ \cancel a\ \cancel a\ \ldots \ \cancel a\ \ \ \  \cancel a\ \cancel a\ \ldots \ \cancel a \} \nonumber \\
 \times & \{\underbrace{b \ b \ \ldots \ b\ \ \ \ \cancel b\ \cancel b\ \ldots \ \cancel b}_{\zeta^{n_2}}\ \ \ \  \cancel b \ \cancel b \ \ldots \ \cancel b \ \ \ \ \ b\ b\ \ldots \ \ b \} \\ \nonumber
\zeta_3^{n_3} T_{ab}=&\{a \ a\  \ldots a \  \ \ \ a  \ a \ \ldots \ a \ \ \ \overbrace{\cancel a\ \cancel a\ \ldots \ \cancel a\ \ \ \  \cancel a\ \cancel a\ \ldots \ \cancel a \}}^{\zeta^{n_3}} \nonumber \\
 \times & \{b \  b \ \ldots \  b\ \ \ \ \cancel b\ \cancel b\ \ldots \ \cancel b \ \ \ \ 
\underbrace{ \cancel b \ \cancel b \ \ldots \ \cancel b \ \ \ \ \ b\ b\ \ldots \ b}_{\zeta^{n_3}}\}
\end{align}
Hence $T_{ab}(m_1, m_2, m_3; n_1, n_2, n_3)= \zeta^{n_1}_1 \zeta^{n_2}_2 \zeta^{n_3}_3T_{ab}(m_1, m_2, m_3)$ is defined by 6 integer parameters (the same as the number of degrees of freedom in the state space $\mathbb S^6$ of two qubits in quantum theory, modulo global phase). The frequency of occurrence of the combinations $(a, b)$, $(a, \cancel b)$, $(\cancel a, b)$ and $(\cancel a, \cancel b)$ are determined by $m_1$, $m_2$ and $m_3$ and are independent of $n_1$, $n_2$ and $n_3$. 
Generalising the correspondence (\ref{1qubit}) let us write
\be
\label{2qubit}
\mathscr T_{ab}(m_1, m_2, m_3; n_1, n_2, n_3)\equiv
|\psi_{ab}\rangle=
\cos \frac{\theta_1}{2}|a\rangle
|\psi_b(\theta_2, \phi_2)\rangle
+ e^{i \phi_1} \sin \frac{\theta_1}{2} |\cancel{a}\rangle |\psi_b(\theta_3, \phi_3)\rangle
\ee
where
\begin{align}
\frac{N-m_1}{N}&=\cos^2 \frac{\theta_1}{2};\ \ \ 
\frac{N-m_2}{N-m_1}=\cos^2 \frac{\theta_2}{2};\ \ \ 
\frac{m_1-m_3}{m_1}=\cos^2 \frac{\theta_3}{2};\ \ \ 
\nonumber \\
\frac{n_1}{N}&=\frac{\phi_1}{2\pi};\ \ \ 
\frac{n_2}{N-m_1}=\frac{\phi_2}{2\pi};\ \ 
\frac{n_3}{m_1}=\frac{\phi_3}{2\pi}
\end{align}
As before, making use of the correspondence between quaternions and rotations in space, the angles in (\ref{2qubit}) can be related to orientations in physical space. In addition, the squared amplitudes of the Hilbert tensor product (\ref{2qubit}) corresponds to the frequency of occurrences of $\{a, b\}$, $\{\cancel a, b\}$, $\{a, \cancel b \}$ and $\{\cancel a, \cancel b  \}$ respectively. As before, $|\psi_{ab} \rangle$ represents an equivalence class $\mathscr T_{ab}$ of bit strings such that if $T_a \times T_b$ and $T'_a \times T'_b$ are two members of $\mathscr T_{ab}$, then there exists a permutation $P$ such that $T'_a=P\circ T_a$ and $T'_b=P\circ T_b$ for all $m_j$, $n_j$ - as above, corresponding to a global phase transformation. No matter how large is $N$, it is necessarily the case that 
\be
\cos \theta_j \in \mathbb Q; \ \ \ \frac{\phi_j}{2\pi} \in \mathbb Q\ \ \ \ 1\le j \le 3
\ee

As a particular example, consider
\begin{align}
\label{entangledstate}
T_{ab}(\frac{N}{2}, N-m, m) &=\{\overbrace{ a \ a\  \ldots \  a\ \ \  a \  a \ \ldots \  a}^{\frac{N}{2}}\ \ \ \overbrace{\cancel a\ \cancel a\ \ldots \ \cancel a\ \ \cancel a\ \cancel a\ \ldots \ \cancel a}^{\frac{N}{2}} \}\nonumber \\
& \times \{\underbrace{ b\ b\ \ldots \ b}_{m}\ \ \ \ \underbrace{\cancel b \  \cancel b \ \ldots \  \cancel b}_{\frac{N}{2}-m}\ \ \ \ 
\underbrace{\cancel b \ \cancel b \ \ldots \ \cancel b}_{m}\ \ \ \underbrace{b\ b\ \ldots \ b}_{\frac{N}{2}-m}
\} \nonumber \\
&= T_a(\frac{N}{2},0) \times \zeta^{\frac{N}{2}-m}\ T_b(\frac{N}{2},0)
\end{align}
It can be seen that the parameter $m$ determines the correlation between the two bit strings: for $m=0$ they are perfectly anti-correlated, for $m=N$ they are perfectly correlated. Then
\be
\label{bellstate}
T_{ab}(\frac{N}{2}, N-m, m) \equiv \frac{|a\rangle |\cancel b\rangle - |\cancel a\rangle |b\rangle}{\sqrt 2},
\ee
describes the Bell State where the two detectors are orientated at a relative orientation of $\theta$, where $\cos \theta = 1-2m/N$. In Section \ref{implications}, the elements of the two bit strings in (\ref{entangledstate}) are interpreted as deterministic spin measurement outcomes $A_X(\lambda)$, $B_Y(\lambda)$ in a CHSH experiment, where the hidden-variable $\lambda$ represents the position of an element $A_X(\lambda)$ or $B_Y(\lambda)$ on the bit string. 

The correspondence (\ref{2qubit}) can be generalised so that the Cartesian product $T_{ab\ldots d}=T_a \times T_b \times \ldots \times T_d$ comprising $J$ bit strings, is defined by $2^{J+1}-2$ integer parameters, the same number of degrees of freedom associated with an $J$-qubit state in quantum theory. For example with $J=3$, 
\begin{align}
\label{c2}
&T_{abc}(m_1, m_2, \ldots m_7; \ldots) =\\
&\{\overbrace{ a \ \ a\  \ \ldots \ \  a\ \ \ \ \ a \ \ \  a \ \ \ldots \ \  \ a}^{N-m_1}\ \ \ \ \ \ \ \ \ \overbrace{\cancel a\ \ \ \cancel a\ \ \ \ldots \ \ \ \cancel a\ \ \ \ \cancel a\ \ \ \cancel a\ \ \ \ldots \ \ \ \cancel a}^{m_1} \}
\nonumber \\
 \times &\{\underbrace{b  \ \ b \ \ \ldots \  \ b}_{N-m_2}\ \ \ \ \ \ \underbrace{\cancel b\  \ \ \cancel b\ \ \ \ldots \  \ \ \cancel b}_{m_2-m_1}\ \ \ \ \ \ \ \ 
\underbrace{\cancel b \ \ \ \cancel b \ \ \ \ldots \ \ \ \cancel b}_{m_3}\ \ \ \ \ \underbrace{b\ \ \ b\ \ \ \ldots \ \ \ b}_{m_1-m_3} \} 
\nonumber \\
 \times &\{\underbrace{c \ldots c}_{N-m_4} \underbrace{\cancel c \ldots \cancel c}_{m_4-m_2}\ \ \ \ 
\underbrace{c \ldots c}_{m_2-m_5} \underbrace{\cancel c \ldots \cancel c}_{m_5-m_1}\ \ \ \ \ \ \ \ \underbrace{c \ldots c}_{m_6} \ \underbrace{\cancel c \ldots \cancel c}_{m_3-m_6}\ \ \ \ \ 
\underbrace{c \ldots c}_{m_7} \underbrace{\cancel c \ldots \cancel c}_{m_1-m_3-m_7}\} \nonumber \\ 
\end{align}
with 7 additional `complex' phases defined by permutation operators
\begin{align}
\label{c3}
&T_{abc}(\ldots ; n_1, n_2, \ldots n_7) =\\
&\{\overbrace{ a \ \ a\  \ \ldots \ \  a\ \ \ \ \ a \ \ \  a \ \ \ldots \ \  \ a \ \ \ \ \ \ \ \ \ \cancel a\ \ \ \cancel a\ \ \ \ldots \ \ \ \cancel a\ \ \ \ \cancel a\ \ \ \cancel a\ \ \ \ldots \ \ \ \cancel a}^{\zeta^{n_1}} \}
\nonumber \\
 \times &\{\underbrace{\cancel b  \ \ \cancel b \ \ \ldots \  \ \cancel b \ \ \ \ \ \ b\  \ \ b\ \ \ \ldots \  \ \ b}_{\zeta^{n_2}}\ \ \ \ \ \ \ \ 
\underbrace{b \ \ \ b \ \ \ \ldots \ \ \ b \ \ \ \ \ \cancel b\ \ \ \cancel b\ \ \ \ldots \ \ \ \cancel b}_{\zeta^{n_3}} \} 
\nonumber \\
 \times &\{\underbrace{c \ldots c\  \cancel c \ldots \cancel c}_{\zeta^{n_4}}\ \ \ \ \ \ \
\underbrace{c \ldots c \ \cancel c \ldots \cancel c}_{\zeta^{n_5}}\ \ \ \ \ \ \ \ \ \underbrace{c \ldots c \ \cancel c \ldots \  \cancel c}_{\zeta^{n_6}}\ \ \ \ \ \ 
\underbrace{c \ldots c\  \cancel c \ldots \cancel c}_{\zeta^{n_7}}\} \nonumber \\ 
\end{align}
Here the elements of all three bit strings are permuted within the bounds of the under- or over-brace. This leads to the $14=2^4-2$ degrees of freedom for $T_{abc}(m_1\ldots m_7; n_1\ldots n_7)$. The generalisation to arbitrary $J$ is straightforward. 

As before, write $\mathscr T_{ab \ldots d} \equiv |\psi_{ab \ldots d} \rangle$ where $\equiv$ denotes equality modulo a global bijection of $T_{ab\ldots d}$. This constitutes a major difference with quantum theory where the state space of $J$ qubits is $\mathbb S^{2^{J+1}-2}$. The exponentially larger dimension of the state space in quantum theory arises from the continuum continuity constraints. Are there any experimental consequences of such a difference? Whereas in quantum theory $J$ can be arbitrarily large, in this finite theory, the finite length $N=2^M$ of a bit string imposes an upper limit $J=M$ on the number of bit strings and hence qubits that can be correlated in this way. It can be noted that as the number of qubits tends to the maximum number $M$, then the correlations between entangled qubits becomes increasingly sensitive to the small sample sizes. In this way, it is not the case that entanglement simply ceases at $M$ qubits, it becomes an increasingly noisy and unreliable resource as the number of qubits tends to $M$. Consistent with this, a collection of $J$ qubits where $J>M$ cannot behave purely quantum mechanically and becomes more and more classical the larger is $J$. In this sense, Bohr's delineation between the classical and quantum worlds (as embodied in the Copenhagen interpretation) appears as an emergent concept within this finite framework. 

\subsection{The Bell Theorem}
\label{Bell}

As summarised below, the invariant set model provides new perspectives on the issues of free choice and locality for deterministic interpretations of the Bell Theorem. To see this we first establish some basic notation.  A pair of entangled particles, represented by some specific value of some supplementary variable $\lambda$, have been produced by a common source, are spatially separated, and their spins are each measured by one of two distant experimenters, Alice and Bob. These experimenters can each choose one of two measurement settings, $X \in \{0,1\}$ and $Y \in \{0,1\}$ respectively. The values of $X$ and $Y$ correspond to orientations of finite-precision measurement devices, over which, as with the discussion of the Sten-Gerlach experiment, Alice and Bob have limited control. Despite this limited control, any specific pair of measurements on some specific $\lambda$ corresponds to a pair of precise points on the sphere (see Fig \ref{F:CHSH}). Taken over multiple particle pairs, and for reasonably precise measuring devices, these pairs of points will lie within correspondingly small neighbourhoods $\mathcal X_0$, $\mathcal X_1$, $\mathcal Y_0$, $\mathcal Y_1$ on the sphere. Once performed, the measurements yield outcomes $A \in \{0,1\}$ and $B\in \{0,1\}$ respectively. We assume, as is conventional, some deterministic theory such that $A=A_{XY}(\lambda)$, $B=B_{YX}(\lambda)$ are deterministic formulae. Hence, with $\Lambda$ a finite sample space of supplementary variables,
\begin{equation}
\label{prob}
E(AB|XY)= \sum_{\lambda \in \Lambda} A_{XY}(\lambda) B_{YX}(\lambda)\;  p(\lambda | XY) 
\end{equation}
denotes an expectation value for the product $AB$, and where $p(\lambda|XY)$ denotes a probability function on $\lambda \in \Lambda$. For our finite theory, $p(\lambda|XY)$ can be simply taken as some normalisation constant $m_{XY}$. It is crucially important in all that follows to note that the expectation value on the left hand side of (\ref{prob}) is defined from specific points in the neighbourhoods $\mathcal X_0$ etc, but is not defined over the whole of these neighbourhoods. We now define:
\begin{itemize}
\item \textbf{Statistical Independence}: $p(\lambda | XY)=p(\lambda)$, i.e. $m_{XY}=m_0$. 
\item \textbf{Factorisation}: $A_{XY}(\lambda)B_{YX}(\lambda) = A_X(\lambda)B_Y(\lambda)$
\end{itemize}
Bell's Theorem \cite{Bell} implies that if a putative deterministic theory satisfies Statistical Independence and Factorisation, then it must satisfy the Bell Inequality and be inconsistent with experiment. In analyses of the Bell Theorem, Statistical Independence and Factorisation are conventionally taken as definitions of free choice and local causality, respectively. A theory which violates local causality is referred to as Superdeterministic \cite{HossenfelderPalmer}. Below we show that in the invariant set model, both Statistical Independence and Factorisation are violated. It will, however, be argued that such violation need not imply a violation of either free choice or local causality if one returns to the physical space-time concepts underpinning free choice and local causality as discussed by Bell and others. This motivates modified definitions of Statistical Independence and Factorisation, which are not violated in the invariant set model. 
\begin{figure}
\centering
\includegraphics[scale=0.3]{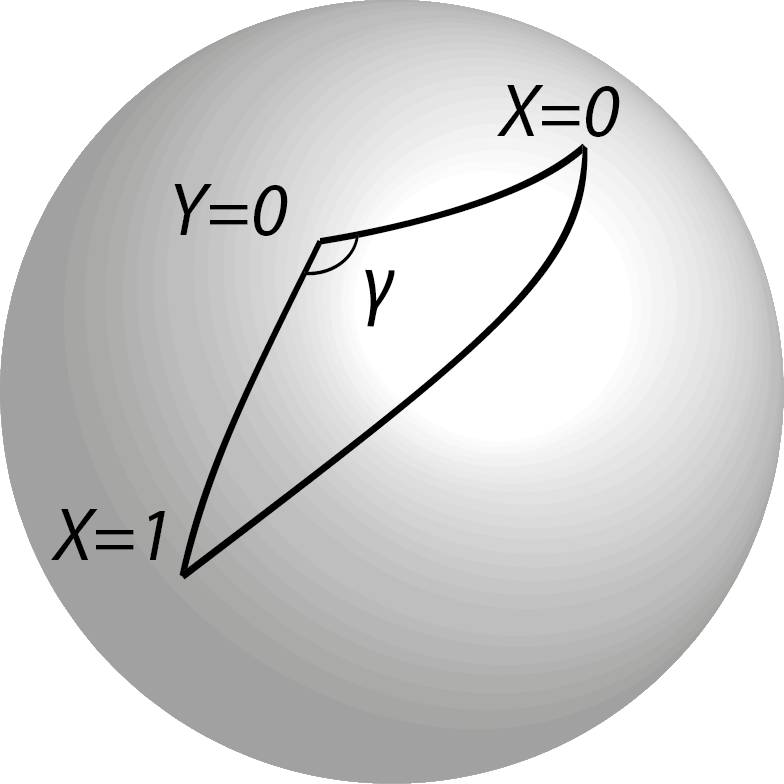}
\caption{\emph{In general it is impossible for all the cosines of the angular lengths of all three sides of the spherical triangle $\triangle(X=0, X=1, Y=0)$ to be rational, and the internal angle $\gamma$ to be a rational multiple of $\pi$. That is, the finiteness conditions for Hilbert states cannot be satisfied for a counterfactual measurement $X=1$, $Y=0$ on a particular particle pair, when it is satisfied for a realisable measurement $X=0$, $Y=0$ on the same particle pair. Because of this, Statistical Independence and Factorisation are violated in the invariant set model. However, precisely because the counterfactual state violates the finiteness conditions and therefore does not correspond to a state of physical reality on $I_U$, the invariant set model does not satisfy `Statistical Independence on $I_U$' and `Factorisation on $I_U$'. Hence whether invariant set theory is local or nonlocal depends critically on whether locality should be expressible entirely in terms of changes to quantities defined in space-time, or should have unrestricted access to counterfactual states in potentially non-onic parts of state space.}}
\label{F:CHSH}
\end{figure}

For some particular $\lambda$, suppose in reality Alice chooses $X=0$ and Bob chooses $Y=0$ so that $\rho (\lambda |00)=m_{00} \ne 0$. In invariant set theory, this implies that the triple $(\lambda, X=0, Y=0)$ corresponds to a point on $I_U$. According to the analysis in Section \ref{entangle}, this is possible if $\cos \theta_{00}$ is rational (where $\theta_{XY}$ denotes the angular distance between the precise points associated with $X$ and $Y$ on the sphere). Now, when we ask, for example, whether $\rho (\lambda |10)=m_{00}$, then noting that this equation refers to the same specific $\lambda$, the conditional $Y=0$ is not referring to some separate point in the nominal neighbourhood $\mathcal Y_0$, but to the specific point $Y=0$ on the Bloch sphere associated with the particular measurement Bob made for that specific $\lambda$. According to the analysis in Section \ref{entangle}, this implies that $\cos \theta_{10}$ must also be rational. 

However, this implies an inconsistency. Although the choice $X=1$ for that specific $\lambda$ is counterfactual (since we have assumed Alice actually chose $X=0$), whatever the specific measurement orientation associated with $X=1$, it should always be possible for Alice, having measured the particle with respect to $X=0$, to feed that same particle back into the measuring apparatus, now set to $X=1$. Hence, following the analysis of single qubit physics, the cosine of the angle between $X=0$ and $X=1$ must also be rational. However, now we have a contradiction, identical in form to the analysis of non-commuting observables in the sequential Stern-Gerlach analysis. Specifically, if two sides of the triangle $\triangle(X=0, X=1, Y=0)$ are rational, the third side cannot be. (One can ask whether all of $X=0$, $X=1$ and $Y=0$ could lie on a single great circle precisely. However, as discussed for the sequential Stern-Gerlach experiment, if there is a maximum finite resolution to the measuring system, it will be impossible for Alice and Bob to control their measurement orientations sufficiently for $X=0$, $X=1$ and $Y=0$ to lie \emph{precisely} on a great circle.)

Hence one must conclude that $\rho(\lambda |10)=0\ne m_{00}$. The counterfactual perturbation that takes $X=0$ to $X=1$, keeping $\lambda$ and Bob's measurement orientation fixed, takes a state $U$ of the universe off $I_U$ and hence to a state of physical unreality. Alice and Bob do not have sufficient control on their measuring instruments to set them so that the cosine of their relative orientation is irrational - which is to say that the triple $(\lambda, X=1, Y=0)$ is not expressible as an experiment in space-time. The counterfactual experiments considered here takes states off the invariant set to a zero of the invariant (Haar) measure of $I_U$. 

A similar argument shows that Factorisation is violated. By the argument above, if $B_{00}(\lambda)$ has some definite value, then $B_{01}(\lambda)$ does not. Hence Factorisation fails. 

The key questions that arise from these conclusions are whether the invariant set model violates free choice and local causality. Clearly, if one defines free choice and local causality by Statistical Independence and Factorisation, then invariant set theory must be nonlocal. However, this raises a subtle but important point which does not arise when considering more simplistic types of hidden-variable model. Let us first focus on the issue of locality. Bell himself wrote \cite{Bell} `..my intuitive notion of local causality is that events in [a space-time region] 2 should not be 'causes' of events in [a spacelike separated space-time region] 1, and vice versa.' More recently, Wharton and Argaman approach locality in terms of `screening regions' in space time \cite{Wharton}. Their  `Continuity of Action' (on which the notion of locality is derived) is motivated by considering whether `changes in [space-time region] 2 can be associated with changes in [space-time region] 1 without being associated with changes within [some intermediate shielding region] S'. Again, the guiding concept behind these notions of local causality is that of causal relationships between events in space time. 

However, as discussed, the perturbations which demonstrate that Statistical Independence and Factorisation are violated in the invariant set model, take a point on $I_U$ and perturb it to a point which does not lie on $I_U$. By the postulates of invariant set theory (see Section \ref{invariant}), such a perturbed point does not correspond to a state of physical reality and is therefore not associated with some perturbation or change to events or fields in space-time. That is to say, the violations of Statistical Independence and Factorisation in invariant set theory are not associated with changes which have any expression in space time, they are instead associated with counterfactual changes, defined in state space. Hence, if we insist that the notions of free choice and local causality are strictly associated with changes which are expressible in space time, then invariant set theory is locally causal and does not violate free choice. 

To make this point more precisely we need to modify the definitions of Statistical Independence and Factorisation so that they only refer to changes or perturbations which are expressible in space-time. This is easily done by the following:
\begin{itemize}
\item \textbf{Statistical Independence on $I_U$}: $p(\lambda | XY)=p(\lambda)$ for triples $(\lambda, X,Y)$ corresponding to states on $I_U$.
\item \textbf{Factorisation on $I_U$}: $A_{XY}(\lambda)B_{YX}(\lambda) = A_X(\lambda)B_Y(\lambda)$ for triples $(\lambda, X,Y)$ corresponding to states on $I_U$
\end{itemize}
The `Factorisation on $I_U$' condition is therefore a description of locality, if locality is assumed to refer only to relationships between events and fields expressible in space time. A similar analysis applies to the notion of free will. If free will is defined by the notion that `I could have done otherwise', then free will necessarily involves counterfactual worlds which have no expression in space-time. However, free will can also be defined by processes which do occur in space time: an agent is free if there are no constraints that might otherwise prevent her from doing as she wishes (this is the `compatibilist' definition of free will \cite{Kane}). 'Statistical Independence on $I_U$' expresses this more operational notion of free will/free choice. 

Invariant set theory satisfies `Statistical Independence on $I_U$' and `Factorisation on $I_U$'. To be more explicit, $\lambda \in \Lambda$ associated with a state on $I_U$ - by the analysis above - belongs to one of two classes: either 1) where $X=0, Y=0$, or $X=1, Y=1$; or 2) where $X=0, Y=1$, or $X=1,Y=0$. For some $\lambda$ belonging to one of these two classes (say class 1)), knowing the value of $X$ (say $X=0$) fixes the value of $Y$ ($Y=0$), and hence the argument $Y$ in $A_{XY}(\lambda)$ is redundant, which can therefore be written $A_X(\lambda)$. Similarly for $B_{YX}(\lambda)$ where, given the value of $Y$, the argument $X$ is redundant. Hence, with a $\lambda$ belonging to one of these two classes we can write $A_{XY}(\lambda)B_{YX}(\lambda)=A_X(\lambda)B_Y(\lambda)$ which implies `Factorisation on $I_U$'. In this sense, the supplementary variable $\lambda$ characterises properties of $I_U$ which are themselves not discoverable by experiment or computation in space-time. This of course is consistent with the notion of $\lambda$ being `hidden'. However, rather than postulate hiddenness by fiat, here hiddenness is consistent with (and hence emergent from) the notion that, in a formal sense, almost all non-trivial properties of fractal invariant sets are non-computable \cite{Blum} \cite{Dube:1993}, or in the case of finite representations, computationally irreducible \cite{Wolfram} i.e. such properties cannot be found by reduced-precision computation. It can be noted that Penrose \cite{Penrose:1989} has for many years speculated that non-computability may play a central role in deterministic reformulations of quantum theory; invariant set theory provides a specific illustration of this notion. More recently Walleczek \cite{Walleczek} has concluded that the inaccessibility to the experimenter of the complete quantum state may be a consequence of non-computability at some deeper ontological level. Such non-computability is consistent with the fact that by finite precision, it is impossible for an experimenter to determine whether a putative Hilbert state is associated with finite squared amplitudes and phases, or not. More specifically, $\lambda$ describes how the particle relates to the rest of $U$, embodied in the reference direction $(\theta_{\text{ref}}, \phi_{\text{ref}})$ (Section \ref{SG}), and about the specific helical trajectory $J$, $1\le J\le N$ associated with the current aeon of the universe. Despite, this, some probabilistic facts are known to the local experimenter. For example, as with the Stern-Gerlach experiment, taken over a random set of unknown $\lambda$ values, one can assume that $(\theta_{\text{ref}}, \phi_{\text{ref}})$ is random with respect to a uniform probability distribution on the sphere. With this, the probability of up/down measurement outcomes by either Alice or Bob will be equal to 1/2, consistent with (\ref{bellstate}). 

As discussed in Section \ref{entangle}, invariant set theory is based on Hilbert tensor products. In particular, the singlet Bell state is represented by a pair of $N$-bit strings (c.f. equation (\ref{bellstate})) and hence violates the Bell inequality exactly as does quantum theory. Hence, another way to address this question of locality vs nonlocality is to ask whether quantum theory is itself local or nonlocal. Here the community is divided. Whilst some researchers (e.g. \cite{Brunner}, \cite{Maudlin}) conclude quantum theory is nonlocal, others are adamant it is not (e.g. \cite{Griffiths:2019}, \cite{Khrennikov}). These latter papers emphasise that the violation of the Bell inequality arises in quantum theory from the incompatibility of non-commuting observables, and not from nonlocality \emph{per se}. In invariant set theory, the incompatibility of non-commuting observables in quantum theory has been represented by number-theoretic incommensurateness associated with perturbations in state space which would take states on $I_U$, off $I_U$. Again, this is not inconsistent with local causality if the latter is defined purely in terms of causal relationships between events or fields in space time. 

In conclusion:

\begin{itemize}

\item Invariant set theory violates both Factorisation and Statistical Independence. If these conditions define local causality and free choice, then invariant set theory is nonlocal. 

\item However, if local causality is defined purely as a restriction on the relationships between space-time events (and more generally in terms of fields in space-time), then locality in invariant set theory should be defined by `Factorisation on $I_U$', in terms of which invariant set theory is local. Bell's intuitive formulation of local causality (and other more recent ones) is based on causal relationships between space-time events. 

\item Similarly, if free choice is defined as an absence of restrictions, expressible in space-time, on an experimenter's desires or wishes (e.g. associated with neuronal action in the experiment's brain), then free choice in invariant set theory should be defined by `Statistical Independence on $I_U$', in terms of which invariant set theory does not violate free choice.  Compatibilist formulations of free will  (i.e. those consistent with determinism) are defined in terms of constraints in space time. Non-compatibilist definitions (`I could have done otherwise') are not. 

\item Researchers who believe that quantum theory is local do so because the analysis of the Bell Inequality within a purely quantum theoretic framework indicates that the key reason the Bell inequality is violated is the incompatibility of non-commuting observables (rather than nonlocality \emph{per se}). Such incompatibility arises from the non-classical structure of quantum theory in configuration space and not from violation of causality in space-time. Consistent with this, invariant set theory provides a deterministic account of such incompatibility in terms of a non-classical fractal invariant set structure on state space, associated with number-theoretic restrictions on Hilbert states. Perturbations which take states off the invariant set have no expression as changes in space-time. Hence, insofar as quantum theory is local, so is invariant set theory. Insofar as quantum theory is nonlocal, so is invariant set theory. 

\end{itemize}

\subsection{GHZ}
\label{GHZ}

The arguments above can also be straightforwardly applied to interpret the GHZ state \cite{GHZ} 
\be
|\psi_{\text{GHZ}}\rangle=\frac{1}{\sqrt 2}(|v_A\rangle|v_B\rangle |v_C\rangle +|h_A\rangle|h_B\rangle |h_C\rangle)
\ee
realistically. Here we consider a polarisation-entangled state comprising three photons where $v$ and $h$ denote vertical and horizontal polarisation. As before, it is possible to choose to make linear polarisation measurements on any of the three photons at an angle $\phi$ to the $v/h$ axis, providing $\cos^2 \phi$ and hence $\cos 2 \phi$ is rational. The corresponding unitary transformation is
\be
\label{GHZtrans}
\begin{pmatrix}
v' \\
h'
\end{pmatrix}
=
\begin{pmatrix}
\cos \phi & - \sin \phi \\
\sin \phi & \cos \phi
\end{pmatrix}
\begin{pmatrix}
v \\
h
\end{pmatrix}
\ee
It is also possible to choose to make circular polarisation measurements on any of the photons, whence the corresponding unitary transformation is  
\be
\label{GHZtrans2}
\begin{pmatrix}
L\\
R
\end{pmatrix}
=
\begin{pmatrix}
1 & -i \\
1 & i
\end{pmatrix}
\begin{pmatrix}
v \\
h
\end{pmatrix}\ \ \ \ \ \ \ \ \ \ 
\ee
By considering the case $\phi \approx 45^\circ$ and both linear and circular polarisation possibilities for the photons, it is well known that it is impossible to explain measurement correlations on the GHZ state with a conventional classical local hidden-variable theory. However, it is possible to explain these correlations realistically using the finite theory developed above. To see this, consider, say, the second photon. Suppose in reality the experimenter measures this photon relative to the $v'/h'$ basis. Let us ask the counterfactual question: What would she have measured had she measured this photon relative to the $L/R$ basis? To answer this question, note from (\ref{GHZtrans}) and (\ref{GHZtrans2})
\begin{align}
\begin{pmatrix}
L\\
R
\end{pmatrix}
=&
\begin{pmatrix}
1 & -i \\
1 & i
\end{pmatrix}
\begin{pmatrix}
\cos \phi & \sin \phi \\
-\sin \phi & \cos \phi
\end{pmatrix}
\begin{pmatrix}
v' \\
h'
\end{pmatrix} \nonumber \\
=&
\begin{pmatrix}
e^{i \phi} & e^{i(\phi-\pi/2)} \\
e^{-i\phi} & e^{-i(\phi-\pi/2)}
\end{pmatrix}
\begin{pmatrix}
v' \\
h'
\end{pmatrix}
\end{align}
As above, although $\phi$ may equal $45^{\circ}$ to any nominal accuracy, it cannot equal $45^{\circ}$ \emph{precisely}. Hence if $U$ denotes a universe where the experimenter chose to measure the linear polarisation of one of these photons - implying that $\cos  2\phi$ must be rational - then the Hilbert state corresponding a measurement of circular polarisation on this same photon, is undefined, because if $\cos  2\phi$ is rational, then $\phi$ cannot be a rational multiple of $2 \pi$. Conversely, if the experimenter chose to measure circular polarisation, then she could not have measured linear polarisation. In these cases, the counterfactual experiments cannot lie on $I_U$ and therefore do not correspond to states of reality. Hence the argument fails that would otherwise disallow a realistic interpretation of GHZ.

\section{Discussion}
\label{discussion}

\begin{quote}
`\emph{The infinite is nowhere to be found in reality, no matter what experiences, observations, and knowledge are appealed to.} David Hilbert \cite{Hilbert}.
\bigskip
\end{quote}
Many scientists will be sympathetic to the implications of Hilbert's observation - that if the infinite is nowhere to be found in reality, it should neither be found in our descriptions of reality.  However, this is problematic for quantum theory where, even for finite dimensional systems, the notion of the infinitesimal plays a vital role (\cite{Hardy:2004}). 

In this paper we have developed a finite deterministic alternative to quantum theory. Quantum theory does not arise as the smooth limit of invariant set theory as $N \rightarrow \infty$, but is a singular limit at $N=\infty$. In this paper we have focussed entirely on the discretised kinematics of finite dimensional quantum systems, showing how a constructive description of complex Hilbert states on the Bloch Sphere leads to novel realistic interpretations of what are often seen as the defining characteristics of quantum physics:  quantum complementarity uncertainty relationships and entanglement. A new perspective on the issue of nonlocality in quantum physics has been presented. These novel interpretations arise from number-theoretic properties of trigonometric functions. Such properties are not part of the standard quantum theoretic canon, possibly suggesting the existence of new physics. 

Readers may well wonder the extent to which the current theory is truly finite, being based on fractal geometry. A fractal can be defined in term of some intersection 
\be
I=\bigcup_{j=1}^{\infty} I_j
\ee
of fractal iterates $I_j$ of a fractal invariant set $I$. However, all the results in this paper apply arbitrarily well to finite approximations
\be
I_J=\bigcup_{j=1}^{J} I_j
\ee
of $I$ for large enough $J$. In physical terms, such a finite-$J$ approximation would refer to a universe which evolves through a large but finite number of aeons, before repeating. 

We conclude with a central question for any putative theory of physics: What are its experimental implications? Based on the analysis in this paper, the most important of these is the fact that only a finite number $\log_2 N$ of qubits can be maximally entangled. Now it has been speculated that gravity may be an intrinsically decoherent phenomenon \cite{Penrose:1994} \cite{Diosi:1989}, in which case self-gravitation may also put an upper limit the number of entangled qubits. This raises the question of whether the presumed fractal-like geometry which underpins the discretised nature of Hilbert Space may itself provide the basis for a description of gravitational effects. We have, for example, described the measurement process in terms of a clustering of state space trajectories into distinct regions of state space. It is conceivable that this clustering process is itself a description of universal gravitational attraction, and that the assumed largeness of the parameter $N$ may reflect the relative weakness of gravity.

If this is so, then this discretised theory would predict that gravity cannot be an entanglement witness \cite{MarlettoVedral}. In this sense, the proposed approach shares properties in common with collapse theories of quantum physics, in which gravitation plays an important role in effecting collapse of the wavefunction. On the other hand, the proposed discretised theory is \emph{not} a collapse theory, since all superposed Hilbert states on $I_U$ can be interpreted constructively in terms of ensembles of trajectories of $I_U$. 

In the (singular) limit at $N=\infty$, the set of discretised complex Hilbert states becomes the full complex Hilbert Space, whose synthesis with general relativity theory is known to be deeply problematic. By taking a step back towards finite $N$ it may be much easier to synthesise a finite theory of quantum physics with the deterministic, nonlinear causal theory of general relativity. 

\section*{Acknowledgements}

I acknowledge helpful input from from Drs Roger Colbeck, Lucien Hardy, Sabine Hossenfelder, Andrei Khrennikov, Oliver Reardon-Smith, Tony Sudbery, Felix Tennie, Vlatko Vedral and from three anonymous reviewers. This research was supported by a Royal Society Research Professorship. 
\bibliography{mybibliography}
\end{document}